\definecolor{TealBlue}{rgb}{0.184, 0.496, 0.463}
\definecolor{myblue}{rgb}{0.0000,0.4470,0.7410}
\definecolor{myred}{rgb}{0.8500,0.3250,0.0980}
\definecolor{myorange}{rgb}{0.9290,0.6940,0.1250}
\newtheorem{assumption}{Assumption}
\newtheorem{theorem}{Theorem}
\newtheorem{definition}{Definition}
\newtheorem{remark}{Remark}
\newcommand{\farhad}[1]{{\color{black}#1}}
\newcommand{\extra}[1]{{\color{black}#1}}
\DeclareMathOperator*{\argmin}{arg\,min}
\DeclareMathOperator{\sign}{sign}
\begin{document}

\title{\farhad{The Value of Collaboration in Convex Machine Learning with Differential Privacy}}

\author{
\IEEEauthorblockN{
Nan~Wu$^{\ddag}$, 
Farhad~Farokhi$^{*,\dag}$, 
David~Smith$^{*,\S}$, and 
Mohamed~Ali~Kaafar$^{*,\ddag}$}
\IEEEauthorblockA{
$^{\ddag}$Macquarie University \hspace{.15in}
$^{*}$CSIRO's Data61 \hspace{.15in}
$^{\dag}$The University of Melbourne \hspace{.15in}
$^{\S}$Australian National University
}
}

\maketitle

\begin{abstract} In this paper, we apply machine learning to distributed private data owned by multiple data owners, entities with access to non-overlapping training datasets. We use noisy, differentially-private gradients to minimize the fitness cost of the machine learning model using stochastic gradient descent. We quantify the quality of the trained model, using the fitness cost, as a function of privacy budget and size of the distributed datasets to capture the trade-off between privacy and utility in machine learning. This way, we can predict the outcome of collaboration among privacy-aware data owners prior to executing potentially computationally-expensive machine learning algorithms. Particularly, we show that the difference between the fitness of the trained machine learning model using differentially-private gradient queries and the fitness of the trained machine model in the absence of any privacy concerns is inversely proportional to the size of the training datasets squared and the privacy budget squared. We successfully validate the performance prediction with the actual performance of the proposed privacy-aware learning algorithms, applied to: financial datasets for determining interest rates of loans using regression; and detecting credit card frauds using support vector machines.
\end{abstract}

\begin{IEEEkeywords} Machine learning; Differential privacy; Stochastic gradient algorithm.
\end{IEEEkeywords}

\section{Introduction}
\subsection{Motivation and Contributions}
Data analysis methods using machine learning (ML) can unlock valuable insights for improving revenue or quality-of-service from, potentially proprietary, private datasets. Having large high-quality datasets improves the quality of the trained ML models in terms of the accuracy of predictions on new, potentially untested data. The subsequent improvements in quality can motivate multiple data owners to share and merge their datasets in order to create larger training datasets. For instance, financial institutes may wish to merge their transaction or lending datasets to improve the quality of trained ML models for fraud detection or computing interest rates. However, government regulations (e.g., the roll-out of the General Data Protection Regulation in EU, the California Consumer Privacy Act or the development of the Data Sharing and Release Bill in Australia) increasingly prohibit sharing customer's data without consent~\cite{bennett2018revisiting}. Our work here is motivated by the need to conciliate the tension between quality improvement of trained ML models and the privacy concerns for data sharing.

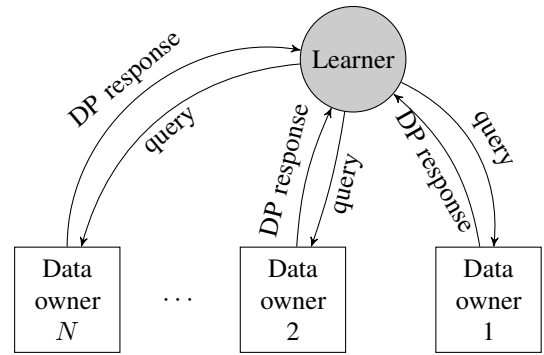
\begin{figure}
\centering
\begin{tikzpicture}[>=stealth']
\node[draw,circle,minimum width=1.4cm,minimum height=1.4cm,fill=black!20] (1) at (-0.5,+1.0) {Learner};
\node[draw,rectangle,minimum width=1.4cm,minimum height=1.4cm] (2) at (+1.3,-2.2) {};
\node[draw,rectangle,minimum width=1.4cm,minimum height=1.4cm] (3) at (-1.3,-2.2) {};
\node[draw,rectangle,minimum width=1.4cm,minimum height=1.4cm] (4) at (-4.3,-2.2) {};
\node[] at (-2.8,-2.2) {$\cdots$};
\draw [->] (1) to [bend left=35] node [sloped,anchor=center,above]  {query} (2);
\draw [->] (2) to [bend right=20] node [sloped,anchor=center,below]  {DP response} (1);
\draw [->] (1) to [bend right=35] node [sloped,anchor=center,below] {query} (4);
\draw [->] (4) to [bend left=50] node [sloped,anchor=center,above] {DP response} (1);
\draw [->] (1) to [bend left=5] node [sloped,anchor=center,below] {query}  (3);
\draw [->] (3) to [bend left=10] node [sloped,anchor=center,above] {DP response} (1);
\node[] at (+1.3,-2.2) {
\begin{minipage}[t]{1cm}
\centering 
Data owner 1
\end{minipage}
};
\node[] at (-1.3,-2.2) {
\begin{minipage}[t]{1cm}
\centering 
Data owner 2
\end{minipage}
};
\node[] at (-4.3,-2.2) {
\begin{minipage}[t]{1cm}
\centering 
Data owner $N$
\end{minipage}
};
\end{tikzpicture}
\caption{\label{fig:0} The communication structure between the learner and the distributed data owners for submitting queries and providing differentially-private (DP) responses.}
\end{figure}

We investigate a machine learning setup in which a learner wants to train a model based on multiple datasets from different data owners. For the purpose of preserving privacy for data contributors, the learner can only submit queries to data owners and they respond by providing differentially-private (DP) responses as illustrated in Figure~\ref{fig:0}. \farhad{We specifically consider honest-but-curious threat models in which different private data owners do not trust each other (or the central learner) for sharing private training datasets, but trust the learner to train the model correctly. As an example, in financial services, a central learner, such as \farhad{a} central bank or government, can be trusted for facilitating computations among banks although they may not trust each other or the learner for accessing private data. Another example is \farhad{for} smart grid in which electricity retailers are private data owners and  \farhad{the} electricity market operator can facilitate learning.} In this paper, the learner submits a gradient query to each data owner. Upon receiving DP responses from data owners to the gradient queries, the learner adjusts the parameters of the ML model in the direction of the average of the DP gradients. Therefore, the quality of the DP responses (in terms of the magnitude of the additive DP noise) from the data owners to the gradient queries determines the performance of the ML training algorithm. 

An important parameter in the ML training algorithm is the step size, the amount by which the model parameters are adjusted in each iteration. If the fitness cost of the ML meets the assumptions of smoothness, strong convexity, and Lipschitz-continuity of the gradient, we can prove that, by selecting the step sizes to be inversely proportional with the iteration number and inversely proportional with the maximum number of iterations squared (see Algorithm~\ref{alg:0} in Section~\ref{sec:DistributedML}), the difference between the fitness of the trained ML model using DP gradient queries and the fitness of the trained ML model in the absence of any privacy concerns becomes small. In fact, the magnitude of the difference becomes inversely proportional to the size of the training datasets squared and the privacy budgets of the data owners squared; see Theorem~\ref{tho:utility_strongconvex} in Section~\ref{sec:theory}. Several ML models and fitness costs, such as linear and logistic regression, satisfy the above-mentioned assumptions. This enables us to predict the outcome of collaboration among privacy-aware data owners and the learner in terms of the fitness cost of the ML training model. However, if the fitness function does not meet these assumptions, we must select the step size to be inversely proportional to the square root of the iteration number. This way, the step size fades away much slower and the effect of the DP noise is more pronounced on the iterates of the learning algorithm. Therefore, we must add an averaging layer on top of the algorithm to reduce the negative impact of the DP noise; see Algorithm~\ref{alg:1} in Section~\ref{sec:DistributedML}. This is based on the developments of~\cite{shamir2013stochastic} with appropriate changes in the averaging step to suit the ML problem with DP gradient queries. In this case, we can prove that the difference between the fitness of the trained ML model using DP gradient queries and the fitness of the trained ML model in the absence of any privacy concerns is inversely proportional to the size of the training datasets (no longer squared) and the  privacy budget (no longer squared); see Theorem~\ref{tho:2} in Section~\ref{sec:theory}. 

\farhad{In this paper, we focus on the case \extra{where} the datasets in possession of the private data owners in Figure~\ref{fig:0} are mutually exclusive or non-overlapping, i.e., two identical records are not shared across the datasets. In many real-life applications within \farhad{the} financial and energy sector\farhad{s}, this is a realistic assumption, e.g, transactional records (e.g. for purchasing goods) are unique by the virtue of timestamps, amounts, and the uniqueness of purchases \farhad{by} an individual. This assumption is set in place to ensure differential privacy using independent additive noises. In the absence of such an assumption, there also needs to be a privacy-preserving mechanism for identifying those common entries without potential information leakage regarding non-common entries, which itself is a daunting task and open problem for research.} 

For experimental verification of the theoretical results, two financial datasets are used in this paper. First, we use a regression model on a dataset containing information on loans made on Lending Club, a peer-to-peer lending platform~\cite{kaggle1}, to automate the process of setting interest rates of loans. Second, we train a support vector machine for detecting fraudulent transactions based on a dataset containing transactions made by European credit card-holders in September 2013~\cite{kaggle2}. We use the experiments to validate theoretical predictions and to gain important insights into the outcome of collaborations among privacy-aware data owners. For instance, even if the learner has access to one large dataset with relaxed privacy constraints, the performance of the trained ML model can be very bad if small \farhad{conservative datasets (i.e., datasets with} very small privacy budgets) also contribute to the learning. Therefore, it is best to exclude smaller conservative datasets from collaboration. This is a \textit{counter-intuitive observation as it clearly indicates that more data is not always good}, if it is obfuscated by conservative data owners. Larger, but conservative, datasets are sometimes worth including in the training as they do not degrade performance heavily with their conservative privacy budgets, yet improve the performance of the trained ML model because of their size. \farhad{These observations can be alternatively interpreted as\farhad{:} collaboration in training a model with a dataset can only be useful if and only if it has enough information (i.e., enough data entries) to suppress the impact of random noise added for privacy guarantees.}

In summary, this paper makes the following contributions:
\begin{itemize}
\item We develop DP gradient descent algorithms for training ML models on distributed private datasets owned by different entities; see Algorithms~\ref{alg:0} and~\ref{alg:1} in Section~\ref{sec:DistributedML}.
\item We prove that the quality of the trained ML model using DP gradient descent algorithm scales inversely with privacy budgets squared, and the size of the distributed datasets squared, which can establish a trade-off between privacy and utility in privacy-preserving ML; 
\item We develop a theory that enables to predict the outcome of a potential collaboration among privacy-aware data owners (or data custodians) in terms of the fitness cost of the ML training model prior to executing potentially computationally-expensive ML algorithms on distributed privately-owned datasets; see Theorems~\ref{tho:utility_strongconvex} and~\ref{tho:2} in Section~\ref{sec:theory}. \farhad{The bounds in these theorems are not necessarily optimal, i.e., there might exist better performance bounds for other privacy-preserving learning algorithms, but, if the data owners follow Algorithms~\ref{alg:0} and~\ref{alg:1}, they can predict their success or failure.}
\item We validate our theoretical analysis by evaluating our differentially private ML algorithms using distributed \farhad{non-overlapping} financial datasets belonging to multiple institutes/banks for determining interest rates of loans using regression, and for  detecting credit card fraud using support vector machine classifier; We further validate the predictions of the analysis with the actual performance of the proposed privacy-aware learning algorithms applied to the distributed financial datasets; see Section~\ref{sec:numerical}.
\item Our experimental results indicate that, in the case of three banks collaborating to train a support vector machine classifier to detect credit card fraud, within only 100 iterations, the fitness of the trained model using DP gradient queries is in average within 90\% of the fitness of the trained model in the absence of privacy concern if the  privacy budget is equal to 1 and each bank has access to a dataset of 30,000 records of credit card transactions and their validity. We observe similar performance results for training a regression model over interest rates of loans with the privacy budget of 10 and datasets of 350,000 records each.
\end{itemize}



\subsection{Related Work}
\textbf{ML using Secure Multi-Party Computation and Encryption.} Secure multi-party computation provide avenues for securing the iterations of distributed ML  algorithms across multiple data owners. In the past, secure multi-party computation has been used in various ML models, such as decision trees \cite{Lindell1010073540445986_3}, regression~\cite{du2004privacy}, association rules~\cite{vaidya2002privacy}, and clustering~\cite{vaidya2008privacy,jagannathan2005privacy}. Training ML models using encrypted data was discussed in~\cite{bonawitz2017practical,graepel2012ml, hunt2018chiron,li2017multi,aono2018privacy}. In~\cite{gilad2016cryptonets}, efficient conversion of models for use of encrypted input data was discussed. The use of secure multi-party computation reduces the computational efficiency of ML algorithms by adding a non-trivial computational and communication performance overhead.

\textbf{ML with Differential Privacy.}
A natural way for alleviating privacy concerns is to deploy privacy-enabled ML using differential privacy (DP)~\cite{sarwate2013signal,zhang2012functional, chaudhuri2009privacy,zhang2016differential}. In~\cite{chaudhuri2009privacy}, a privacy-preserving regularized logistic regression algorithm is provided for learning from private databases by bounding the sensitivity of regularized logistic regression, and perturbing the learned classifier with noise proportional to the sensitivity. This technique is proved to be DP and simulations are used to investigate the trade-off between privacy and learning utility. In~\cite{zhang2012functional}, a large class of optimization-based DP machine learning algorithms are developed by appropriately perturbing the objective function of the ML training algorithm. The mechanism is applied to linear and logistic regression models and shown to provide high accuracy. In the mentioned studies, privacy-preserving ML, however, often relies on an entire dataset, constructed by merging smaller datasets, being stored in one location. The ML model is then either trained on the aggregated dataset, and is systematically obfuscated using additive noise to guarantee differential privacy, or trained on an obfuscated centrally-located data. Such methods do not address the underlying problem that the smaller datasets are owned by multiple entities with restrictions on sharing sensitive data.

\textbf{Distributed/Collaborative Privacy-Preserving ML.}
ML based on distributed private datasets has been recently investigated in, e.g., ~\cite{zhang2017dynamic,huang2018dp,  huang2015differentially,nozari2018differentially, hale2015differentially}. Note that this problem is intimately related to distributed optimization using differentially-private  oracles, as such ML problems can be cast as  distributed optimization problems in which distributed training datasets are represented within cost functions or constraints of the entities. 
Using stochastic gradient descent with additive Gaussian/Laplace noise to ensure DP is also common in the literature; (e.g.,~\cite{shokri2015privacy,abadi2016deep,mcmahan2017learning,zhang2018privacy}). In~\cite{shokri2015privacy}, noisy gradients are used to train a deep neural network. The scale of the required additive noise for DP is reduced in~\cite{abadi2016deep} by employing the idea of moment accountant, instead of standard composition rules. Stochastic gradient descent is also utilized in~\cite{mcmahan2017learning} for recurrent neural network language models. Generalizations for obfuscating individual and group-level trends by DP additive noise are presented in~\cite{zhang2018privacy}. 
Because iterative methods rely on multiple rounds of inquiries of private datasets, for instance, by submitting multiple gradient queries, the privacy budget must be inversely scaled by the total number of iterations to ensure that a reasonable privacy guarantee can be achieved (alternatively, privacy guarantees get weaker as the number of iterations grows because of the composition rule of differential privacy). Hence, if the parameters of the optimization algorithm are not carefully chosen, bounds on the performance of the ML training algorithm deteriorates with an increasing total number of iterations; e.g., see~\cite{han2017differentially}. In~\cite{zhang2017dynamic,huang2018dp}, the  privacy budget was kept constant and therefore by communicating more, as the number of the iterations grows, the privacy guarantee weakens. However, in those studies, if the privacy budget had been scaled inversely proportional to the total number of iterations, privacy guarantees would be maintained over the entire horizon but performance would deteriorate with increasing total number of iterations, as in~\cite{han2017differentially}. 

All these studies, however, do not address the issues of convergence of the learning algorithm, selection of appropriate step size in the stochastic gradient descent, and forecasting of the quality of the trained ML model based on the privacy budget prior to running extensive potentially  computationally-expensive experiments. These missing steps are some of the important contributions of this paper.

\subsection{Paper Organization}
The rest of the paper is organized as follows. We introduce our system model and propose privacy-aware ML algorithms with distributed private datasets in Section~\ref{sec:DistributedML}. We analyze and provide theoretical results for predicting the performance of the privacy-preserving training algorithms in Section~\ref{sec:theory}. We present the experimental results in Section~\ref{sec:numerical}. Finally, we conclude the paper in Section~\ref{sec:conclusions}.

\section{ML Training Algorithm Based on Distributed Private Data with DP Gradient Queries}\label{sec:DistributedML}
\subsection{Setup} \label{subsec:motivation}
Consider a group of $N\in\mathbb{N}$ private agents or data owners $\mathcal{N}:=\{1,\dots,N\}$ that are connected to a node responsible for training a ML model, identified as a learning agent, over an undirected communication graph as in Figure~\ref{fig:0}. Each agent has access to a set of private training data $\mathcal{D}_i:=\{(x_i,y_i)\}_{i=1}^{n_i}\subseteq \mathbb{X}\times\mathbb{Y}\subseteq\mathbb{R}^{p_x}\times\mathbb{R}^{p_y}$, where $x_i$ and $y_i$, respectively, denote inputs and  outputs. Each data owner, for instance, could be a private bank/financial institution. In this case, the private datasets can represent information about loan applicants (such as salary, employment status, and credit rating\footnote{Categorical attributes, such as gender, can always be translated into numerical ones according to a rule.}) as inputs and historically approved interest rates per annum by the bank (in percentage points) as outputs. 

\begin{assumption} \label{assum:exclusive} Private datasets are mutually exclusive, i.e., $\mathcal{D}_i\cap\mathcal{D}_j=\emptyset$ for all $i,j\in\mathcal{N}$.
\end{assumption}

\farhad{
Assumption~\ref{assum:exclusive} states that two identical records, equal in every possible aspect, cannot be in two or more datasets. This is a realistic assumption in many real-life applications, such as financial and energy data. For instance, across multiple banks and financial-service providers, transaction records (e.g. for purchasing goods) are unique by the virtue of timestamps, amounts, and the uniqueness of purchases for an individual. In energy systems, one household cannot transact (for purchasing power) with two or more energy retailers and thus its consumption patter\farhad{n} can only be stored \farhad{by} one retailer. The reasons behind this assumption are two\farhad{-}fold. First, to guarantee $\epsilon$-differential privacy, we need to ensure that the records are not repeated so that an adversary cannot reduce the noise levels by averaging the reports containing information about repeated entries and thus \farhad{exceeding} $\epsilon$ (due to \farhad{the} composition rule for differential privacy). If the datasets \farhad{had} common entries, there \farhad{would} need to be a privacy-preserving mechanism for identifying those common entries without potential information leakage \farhad{with respect to} non-common entries, which is a daunting task. The mutually exclusive or non-overlapping nature of the datasets also results in statistical independence of additive privacy-preserving noise. This independence is extremely useful in comput\farhad{ing} the magnitude of the additive noise for forecasting the performance of privacy-aware learning algorithms. \extra{If records can appear in at most $\kappa\in\{1,\dots, N\}$ datasets and we do not exclude the overlapping entries during the learning, we must ensure that the gradient queries are DP with privacy budget $\epsilon_i/\kappa, \forall i\in\mathcal{N}$. This is to ensure that we can guarantee privacy budget $\epsilon_i$ for the repeated entries across the datasets by using the composition rule for differential privacy. This results in degradation of the fitness of the trained ML model with privacy-preserving algorithms. For instance, in Theorem~\ref{tho:utility_strongconvex}, we show that the difference between the fitness of the trained ML model using DP gradient queries and the fitness of the trained ML model in the absence of any privacy concerns is inversely proportional to the size of the training datasets squared and the privacy budget squared. Therefore, when allowing repeated entries, the difference between the fitness of the private ML model and the fitness of the trained machine model without privacy concerns degrades by a factor of $\kappa^2$.}}

The learning agent is interested in extracting a meaningful relationship between the inputs and outputs using ML model $\mathfrak{M}:\mathbb{X}\times\mathbb{R}^{p_\theta}\rightarrow\mathbb{Y}$ and the available training datasets $\mathcal{D}_i$, $\forall i\in\mathcal{N}$, by solving the optimization problem in
\begin{align}\label{eqn:ML}
\theta^*\in\argmin_{\theta\in\Theta}\Bigg[g_1(\theta)+\frac{1}{n} \sum_{\farhad{j}\in\mathcal{N}}\sum_{\{x,y\}\in\mathcal{D}_j}\hspace{-.1in} g_2(\mathfrak{M}(x;\theta),y)\Bigg],
\end{align}
where $g_2(\mathfrak{M}(x;\theta),y)$ is a loss function capturing the ``closeness'' of the outcome of the trained ML model $\mathfrak{M}(x;\theta)$ to the actual output $y$,  $g_1(\theta)$ is a regularizing term, $n:=\sum_{\ell\in\mathcal{N}}n_{\ell}$, and $\Theta:=\{\theta\in \mathbb{R}^{p_\theta}\,|\,\| \theta \|_\infty\leq \theta_{\max}\}.$ Note that a large enough $\theta_{\max}$ can always be selected such that the search over $\Theta$ does not add any conservatism (in comparison to the unconstrained case), if desired. We use $f(\theta)$ to denote the cost function of~\eqref{eqn:ML} for the sake of the brevity of the presentation, i.e., 
\begin{align}
f(\theta):=g_1(\theta)+\frac{1}{n}\sum_{\{x,y\}\in\bigcup_{j\in\mathcal{N}}\mathcal{D}_j}\hspace{-.1in} g_2(\mathfrak{M}(x;\theta),y).
\end{align}  

\begin{remark}[Generality of Optimization-Based ML] In an automated loan assessment example, a bank maybe  interested in  employing a linear regression model to estimate the interest rate of the loans based on attributes of customers (thus developing an ``AI platform'' for loan assessment and delivery). A linear regression model, as the name suggests, considers a linear relationship between input $x$ and output $y$ in the form of $y=\mathfrak{M}(x;\theta):=\theta^\top x$, where $\theta\in\mathbb{R}^{p_\theta}$ is the parameter of the ML model. We can train the regression model by solving the optimization problem~\eqref{eqn:ML} with $g_2(\mathfrak{M}(x;\theta),y)=\|y-\mathfrak{M}(x;\theta)\|_2^2$, and $g_1(\theta)=0$. In addition to linear (or non-linear) regression discussed earlier, which clearly is of the form in~\eqref{eqn:ML}, several other ML algorithms follow this formulation. Another example is linear support vector machines (L-SVM). In this problem, it is desired to obtain a separating hyper plane of the form $\{x\in\mathbb{R}^{p_x}: \theta^\top [x^\top \; 1 ]^\top=0\}$ with its corresponding classification rule $\sign(\mathfrak{M}(x;\theta))$ with $\mathfrak{M}(x;\theta):=\theta^\top [x^\top \; 1 ]^\top$ to group the training data into two sets (corresponding to $y=+1$ and $y=-1$).  This problem can be cast as~\eqref{eqn:ML} with $g_1(\theta):=(1/2)\theta^\top \theta $ and $ g_2(\mathfrak{M}(x;\theta),y):=\max(0,1-\mathfrak{M}(x;\theta)y).$ We can easily see that the extension to non-linear SVM can also be cast as an optimization-based ML problem. Another example is artificial neural network (ANN). In this case, $\mathfrak{M}(x;\theta)$ describes the input-output behaviour of the ANN with $\theta$ capturing parameters, such as internal thresholds. This problem can be cast as~\eqref{eqn:ML} with $g_1(\theta):=0$ and $g_2(\mathfrak{M}(x;\theta),y):=\|y-\mathfrak{M}(x;\theta))\|_2.$
\end{remark}

If the data owners could come to an agreement to share private data (and it was not illegal to disclose customers' private information without their consent), the learning agent could train the ML model by solving the optimization problem~\eqref{eqn:ML} directly. In practice, however, data owners may not be able to share their private data. In this case, the learning agent can submit queries $\mathfrak{Q}_i(\mathcal{D}_i;k)\in\mathcal{Q}$ to agent $i\in\mathcal{N}$ for $k\in\mathcal{T}:=\{1,\dots,T\}$, where $T$ denotes the number of communication rounds (i.e., the number of queries) agreed upon by all the data owners prior to the exchange of information, index $k$ identifies the current communication round, and $\mathcal{Q}$ denotes the output space of the query. Agent $i\in\mathcal{N}$ can then provide a differentially-private response $\overline{\mathfrak{Q}}_i(\mathcal{D}_i;k)\in\mathcal{Q}$ to the query $\mathfrak{Q}_i(\mathcal{D}_i;k)\in\mathcal{Q}$. 

\begin{definition}[Differential Privacy] \label{def:dp} The response policy of data owner $\ell\in\mathcal{N}$  is  $\epsilon_\ell$-differentially private over the horizon $T$ if
\begin{align*}
\mathbb{P}\bigg\{(\overline{\mathfrak{Q}}_\ell(\mathcal{D}_\ell;k&))_{k=1}^T\in\mathcal{Y}\bigg\}\\
&\leq \exp(\epsilon_\ell)\mathbb{P}\bigg\{(\overline{\mathfrak{Q}}_\ell(\mathcal{D}'_\ell;k))_{k=1}^T\in\mathcal{Y}\bigg\},
\end{align*}
where $\mathcal{Y}$ is any Borel-measurable subset of $\mathcal{Q}^T$, and $\mathcal{D}_\ell$ and $\mathcal{D}'_\ell$ are two adjacent datasets differing at most in one entry, i.e., $|\mathcal{D}_\ell\setminus\mathcal{D}'_\ell|=|\mathcal{D}'_\ell\setminus\mathcal{D}_\ell|\leq 1$.
\end{definition}

The learning agent then processes all the received responses to the queries in order to generate its ML model:
\begin{align*}
\hat{\theta}:=\varsigma((\overline{\mathfrak{Q}}_j(\mathcal{D}_j;k))_{k\in\mathcal{T},j\in\mathcal{N}}),
\end{align*}
where $\varsigma:\prod_{k\in\mathcal{T}}\mathcal{Q}^T\rightarrow\mathbb{R}^{p_\theta}$ is a mapping used by the learning agent for fusing all the available information. 

In the next subsection, we present an algorithm for generating queries, and then use the provided differentially-private responses for computing a trained ML model.  

\subsection{Algorithm} \label{sec:ML_DP}
In the absence of privacy concerns, one strategy for training the ML model by the learning agent is to provide unfettered access to the original private data of the data owners in $\mathcal{N}$. In this case, the learning agent can follow the projected (sub)gradient descent iterations in
\begin{align} \label{eqn:project_subgradient_0}
\theta[k+1]
&=
\Pi_{\Theta}[
\theta[k]
-
\rho_k
\xi_f(\theta[k])],
\end{align}
where $\rho_k>0$ is the step-size at iteration $k$, $\xi_f(\theta[k])$ is a sub-gradient, an element of sub-differentials $\partial_\theta f(\theta[k])$, of the cost function $f$ with respect to the variable $\theta$ evaluated at $\theta[k]$~\cite{shor2012minimization}, and  $\Pi_\Theta[\cdot]$ denotes projection operator into the set $\Theta$ defined as $\Pi_\Theta[a]:=\argmin_{b\in\Theta}\|a-b\|_2.$ For  continuously differentiable functions, the gradient is the only sub-gradient. The use of sub-gradients, instead of gradient in this paper, is motivated by the possible choice of non-differentiable loss functions in ML, e.g., the cost function of the L-SVM.

\begin{assumption} \label{assum:convex} $g_1$ and $g_2$ are convex functions of $\theta$. 
\end{assumption}

Assumption~\ref{assum:convex} implies that $f$ is also a convex function of~$\theta$. The existence of sub-differentials is guaranteed for convex functions~\cite{shor2012minimization}. We define $\bar{g}_2^{x,y}(\theta)=g_2(\mathfrak{M}(x;\theta),y)$.  The update law in~\farhad{\eqref{eqn:project_subgradient_0}} can be rewritten as 
\begin{align}
\theta[k+1]
=
\Pi_{\Theta}\Bigg[&
\theta[k]
-
\rho_k
\xi_{g_1}(\theta[k]) \nonumber \\&-\frac{\rho_k}{n} \sum_{\ell\in\mathcal{N}_j}\sum_{\{x,y\}\in\mathcal{D}_\ell} \xi_{\bar{g}_2^{x,y}}(\theta[k])\Bigg],\nonumber\\
=
\Pi_{\Theta}\Bigg[&
\theta[k]-
\rho_k
\xi_{g_1}(\theta[k])\nonumber
\\
&-\frac{\rho_k}{n} \sum_{\ell\in\mathcal{N}_j\setminus\{j\}} n_\ell \mathfrak{Q}_\ell(\mathcal{D}_\ell;k)\Bigg],
\end{align}
where $\xi_{g_1}$ is a sub-gradient of $g_1$, $\xi_{\bar{g}_2^{x,y}}$ is a sub-gradient of $\bar{g}_2^{x,y}$, and $\mathfrak{Q}_\ell(\mathcal{D}_\ell;k)$ is a query that can be submitted by the learning agent to data owner $\ell\in\mathcal{N}$ in order to provide the aggregate sub-gradient:
\begin{align}
\mathfrak{Q}_\ell(\mathcal{D}_\ell;k)=\frac{1}{n_\ell}\sum_{\{x,y\}\in\mathcal{D}_\ell} \xi_{\bar{g}_2^{x,y}}(\theta[k]).
\end{align}
Responding to the query $\mathfrak{Q}_\ell(\mathcal{D}_\ell;k)$ clearly intrudes on the privacy of the individuals in dataset $\mathcal{D}_\ell$. Therefore, data owner $\ell$ only responds in a differentially-private manner by reporting the noisy aggregate:
\begin{align} \label{eqn:diff_privacy_additive_noise}
\overline{\mathfrak{Q}}_\ell(\mathcal{D}_\ell;k)
=\mathfrak{Q}_\ell(\mathcal{D}_\ell;k)+w_\ell[k],
\end{align}
where $w_\ell[k]$ is an additive noise to establish differential privacy with privacy budget $\epsilon_\ell$ over the horizon $T$; see Definition~\ref{def:dp}. 
As stated before, here, the horizon $T$ is the total number of iterations of the projected sub-gradient algorithm. Note that each neighbour responds to one query in each iteration.

\begin{assumption} \label{assum:maximugradient}
$\Xi:=\max_{(x,y)\in\mathbb{X}\times\mathbb{Y}}\| \xi_{\bar{g}_2^{x,y}}(\theta[k])\big\|_1<\infty$. 
\end{assumption}

Assumption~\ref{assum:maximugradient} implies the gradients or the sub-gradients of fitness function have a bounded magnitude. \farhad{For strongly  convexity loss functions with Lipschitz gradients, this assumption can be satisfied. This is because, for strongly convex functions, the decision variables, i.e., the ML model, remains within a compact set. However, for no\farhad{n-}strongly convex functions, we need to restrict the ML models to  \farhad{the} compact set $\Theta$; see~\eqref{eqn:ML}.}

\begin{theorem} \label{tho:1} The policy of data owner $\ell$ in~\eqref{eqn:diff_privacy_additive_noise}  for responding to the queries is  $\epsilon_\ell$-differentially private over horizon $\{1,\dots,T\}$ if $w_\ell[k]$ are i.i.d.\footnote{independently and identically distributed} noises with the density function
\begin{align*}
p(w)=\bigg(\frac{1}{2b}\bigg)^{p_\theta}\exp\bigg(-\frac{\|w\|_1}{b} \bigg)
\end{align*}
with  scale $b=2\Xi T/(n_\ell \epsilon_\ell)$. 
\end{theorem}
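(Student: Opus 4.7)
The plan is to prove the differential privacy guarantee in three standard steps: (i) bound the $\ell_1$-sensitivity of a single-round query $\mathfrak{Q}_\ell(\mathcal{D}_\ell;k)$, (ii) apply the Laplace mechanism argument to show that a single round is $(\epsilon_\ell/T)$-DP, and (iii) use the basic (sequential) composition rule of differential privacy over the $T$ rounds to obtain the overall $\epsilon_\ell$-DP guarantee.

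For step (i), fix any query point $\theta[k]$ and let $\mathcal{D}_\ell, \mathcal{D}'_\ell$ be adjacent datasets in the sense of Definition~\ref{def:dp}. Since they differ in at most one record (either an addition/removal or a swap), the sum $\sum_{\{x,y\}\in\mathcal{D}_\ell}\xi_{\bar{g}_2^{x,y}}(\theta[k])$ changes by at most $2\Xi$ in $\ell_1$ norm by Assumption~\ref{assum:maximugradient} (each of the removed and the added sub-gradient contributes at most $\Xi$ by the triangle inequality). Dividing by $n_\ell$, the $\ell_1$-sensitivity of $\mathfrak{Q}_\ell(\mathcal{D}_\ell;k)$ is bounded by $\Delta:=2\Xi/n_\ell$, uniformly over $\theta[k]$.

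For step (ii), I would use the standard density-ratio calculation for the multivariate Laplace density $p(w)=(1/(2b))^{p_\theta}\exp(-\|w\|_1/b)$. For any fixed output $q\in\mathcal{Q}$,
\begin{equation*}
\frac{p(q-\mathfrak{Q}_\ell(\mathcal{D}_\ell;k))}{p(q-\mathfrak{Q}_\ell(\mathcal{D}'_\ell;k))}
=\exp\!\bigg(\frac{\|q-\mathfrak{Q}_\ell(\mathcal{D}'_\ell;k)\|_1-\|q-\mathfrak{Q}_\ell(\mathcal{D}_\ell;k)\|_1}{b}\bigg)
\leq \exp\!\bigg(\frac{\Delta}{b}\bigg),
\end{equation*}
by the reverse triangle inequality and the sensitivity bound. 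Plugging in $b=2\Xi T/(n_\ell\epsilon_\ell)$ gives a per-round ratio bound of $\exp(\epsilon_\ell/T)$.

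For step (iii), I would apply sequential composition to the joint distribution of $(\overline{\mathfrak{Q}}_\ell(\mathcal{D}_\ell;k))_{k=1}^T$. The subtlety — and the point I would want to be most careful about — is that $\theta[k]$ is produced by the learner from previously released noisy responses, so the query points are \emph{adaptively} chosen. However, since the sensitivity bound from step (i) is uniform in $\theta[k]$, the per-round guarantee holds for every possible history, which is exactly the setting in which adaptive sequential composition applies. Writing the joint density of $(\overline{\mathfrak{Q}}_\ell(\mathcal{D}_\ell;k))_{k=1}^T$ as a product of conditional Laplace densities and multiplying $T$ per-round ratio bounds yields an overall ratio bounded by $\exp(T\cdot \epsilon_\ell/T)=\exp(\epsilon_\ell)$. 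Integrating this pointwise bound over any Borel set $\mathcal{Y}\subseteq\mathcal{Q}^T$ gives the inequality in Definition~\ref{def:dp}, completing the proof. The main technical care is simply tracking the conditioning in the adaptive composition argument; the sensitivity and single-round Laplace calculations are essentially mechanical given Assumption~\ref{assum:maximugradient}.
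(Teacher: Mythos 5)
Your proposal is correct and follows essentially the same route as the paper's proof: bound the per-round $\ell_1$-sensitivity by $2\Xi/n_\ell$ via Assumption~\ref{assum:maximugradient}, apply the Laplace density-ratio bound, and compose over the $T$ rounds with $b=2\Xi T/(n_\ell\epsilon_\ell)$. Your explicit treatment of the adaptively chosen query points $\theta[k]$ (noting the sensitivity bound is uniform in $\theta[k]$) is a point the paper's proof glosses over by writing the joint density directly as a product, but it is a refinement of the same argument rather than a different one.
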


\begin{proof} See Appendix~\ref{proof:tho:1}.
\end{proof}

\begin{algorithm}[t]
\caption{\label{alg:0} ML training algorithm with distributed private datasets using DP gradients for strongly-convex smooth fitness cost. }
\begin{algorithmic}[1]
\REQUIRE $T$
\ENSURE $(\theta[k])_{k=1}^T$
\STATE Initialize $\theta[1]$
\FOR{$k=1,\dots,T-1$}
\STATE Learner submits query $\mathfrak{Q}_\ell(\mathcal{D}_\ell;k)$ to data owners in $\mathcal{N}$
\STATE Data owners return DP responses $\overline{\mathfrak{Q}}_\ell(\mathcal{D}_\ell;k)$ 
\STATE Learner follows the update rule
\begin{align*}
\hspace{-.1in}\theta[k+1]=\theta[k]&-\frac{\rho}{T^2k}\bigg(\xi_{g_1}(\theta[k])+\sum_{\ell\in\mathcal{N}}\frac{n_\ell}{n}\overline{\mathfrak{Q}}_\ell(\mathcal{D}_\ell;k)\bigg),
\end{align*}
\ENDFOR
\end{algorithmic}
\end{algorithm}

\farhad{Theorem~\ref{tho:1} states that i.i.d.  Laplace additive noise can ensure DP gradients.}
Each response in~\eqref{eqn:diff_privacy_additive_noise}, for a given $k$, using the additive noise density in Theorem~\ref{tho:1} is $(\epsilon_\ell/T)$-differentially private. Therefore, over the whole horizon $\{1,\dots,T\}$, all the responses meet the definition of $\epsilon_\ell$-differential privacy. This follows from the composition of $T$ differentially-private mechanisms~\cite{dwork2014algorithmic}. In~\cite{zhang2017dynamic, huang2018dp}, each response is constructed to ensure $\epsilon$-differential privacy, which implies that the overall algorithm is $\epsilon T$-differentially private, thus reducing the privacy guarantee with increasing the number of the iterations. 

In the presence of the additive noise, the iterates of  the learner follow the stochastic map 
\begin{align} \label{eqn:noisy_subgradient}
\theta[k+1]
&=
\Pi_{\Theta}[
\theta[k]
-
\rho_k(
\xi_f(\theta[k])+w[k])],
\end{align}
where
\begin{align*}
w[k]:=\frac{1}{n}\sum_{\ell\in\mathcal{N}}n_\ell w_\ell[k].
\end{align*}
Algorithm~\ref{alg:0} summarizes our proposed ML algorithm with distributed private datasets using DP gradients. \farhad{Note that, in Algorithm~\ref{alg:0}, the step size, or the learning rate, \extra{decreases} with the iteration number $k$. This is done to reduce the influence of the privacy-preserving additive noise in the performance of the trained model. In the non-private training (i.e., when $\epsilon=+\infty$), we do not need to reduce the step size with iteration number $k$ as there is no privacy-preserving noise. In fact, we can select a constant learning rate to extract the non-private  model; see~\cite{grimmer2019convergence} for convergence analysis of optimization algorithms with constant steps sizes.}

In Section~\ref{sec:theory}, we observe that the performance of Algorithm~\ref{alg:0} can only be assessed under the assumptions of differentiability, smoothness, and strong convexity of the fitness cost. These assumptions are satisfied for several ML models and fitness costs, such as regression. To avoid these assumptions and to also reduce the effect of the additive noise, we can define the averaging variable
\begin{align}
\bar{\theta}[k+1]
&=
\bigg(1-\frac{1/\sqrt{T}+1}{1/\sqrt{T}+k} \bigg)
\bar{\theta}[k]
+\frac{1/\sqrt{T}+1}{1/\sqrt{T}+k} 
\theta[k]\nonumber\\
&=
\frac{k-1}{1/\sqrt{T}+k} 
\bar{\theta}[k]
+\frac{1/\sqrt{T}+1}{1/\sqrt{T}+k} 
\theta[k].\label{eqn:noisy_subgradient_averaged}
\end{align}
Algorithm~\ref{alg:1} summarizes the proposed ML algorithm with distributed private datasets using DP sub-gradients with the additional averaging step as per equation~\eqref{eqn:noisy_subgradient_averaged}. Now, we are ready to analyze the performance our privacy-preserving ML training algorithms.

\begin{algorithm}[t]
\caption{\label{alg:1} ML algorithm with distributed private datasets using DP sub-gradients. }
\begin{algorithmic}[1]
\REQUIRE $T$, $c_1$
\ENSURE $(\theta[k])_{k=1}^T$
\STATE Initialize $\theta[1]$ within $\Theta$
\FOR{$k=1,\dots,T-1$}
\STATE Learner submits query $\mathfrak{Q}_\ell(\mathcal{D}_\ell;k)$ to data owners in $\mathcal{N}$
\STATE Data owners return DP responses $\overline{\mathfrak{Q}}_\ell(\mathcal{D}_\ell;k)$ 
\STATE Learner follows the update rule
\begin{align*}
\hspace{-.3in}\theta[k+1]=\Pi_{\Theta}\bigg[\theta[k]&-\frac{c_1}{\sqrt{k}}\bigg(\hspace{-.03in}\xi_{g_1}(\theta[k])\hspace{-.03in}+\hspace{-.04in}\sum_{\ell\in\mathcal{N}}\frac{n_\ell}{n}\overline{\mathfrak{Q}}_\ell(\mathcal{D}_\ell;k)\hspace{-.04in}\bigg)\bigg],
\end{align*}
\STATE Learner follows the averaging rule
\begin{align*}
\bar{\theta}[k+1]
&=
\frac{k-1}{1/\sqrt{T}+k} 
\bar{\theta}[k]
+\frac{1/\sqrt{T}+1}{1/\sqrt{T}+k} 
\theta[k].
\end{align*}
\ENDFOR
\end{algorithmic}
\end{algorithm}

\section{Predicting the Performance of ML on Distributed Private Data} \label{sec:theory}

For Algorithm~\ref{alg:0}, we can prove the following convergence result under the assumptions of  differentiability, smoothness, and strong convexity of the ML fitness function.

\begin{theorem} \label{tho:utility_strongconvex} Assume that $f$ is a $L$-strongly convex continuously-differentiable function with $\lambda$-Lipschitz gradient and $\theta_{\max}=\infty$ (i.e., there is no constraint). For any $\varepsilon>0$, there exists a large enough $T$ such that the iterates of Algorithm~\ref{alg:0} satisfy
\begin{align} \label{tho:utility_strongconvex:eq1}
\min_{1\leq k\leq T} \mathbb{E}\{f(\theta[k])\}-f(\theta^*)
\leq & \frac{8\Xi^2 \rho}{L n^2}\bigg(\sum_{\ell\in\mathcal{N}}\frac{1}{\epsilon_\ell^2}\bigg)+\varepsilon,
\end{align}
and
\begin{align}\label{tho:utility_strongconvex:eq2}
\min_{1\leq k\leq T}\mathbb{E}\{\|\theta[k]-\theta^*\|_2^2\}\leq & \frac{32\Xi^2 \rho}{L^2 n^2}\bigg(\sum_{\ell\in\mathcal{N}}\frac{1}{\epsilon_\ell^2}\bigg)+\frac{\varepsilon}{4L}.
\end{align}
\end{theorem}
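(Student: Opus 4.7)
My plan is to reduce the theorem to a standard SGD analysis for $L$-strongly-convex, $\lambda$-smooth objectives with zero-mean additive gradient noise, tailored to the step-size schedule $\eta_k := \rho/(T^2 k)$ and to the Laplace noise implied by Theorem~\ref{tho:1}. The first step is to quantify the aggregate noise: each $w_\ell[k]$ is a product of $p_\theta$ i.i.d.\ Laplace coordinates with scale $b_\ell = 2\Xi T/(n_\ell \epsilon_\ell)$, so $\mathbb{E}[\|w_\ell[k]\|_2^2] = 2 p_\theta b_\ell^2$, and by independence of the owners' noises the aggregate $w[k] = (1/n)\sum_\ell n_\ell w_\ell[k]$ has second moment
\begin{align*}
\sigma^2 := \mathbb{E}[\|w[k]\|_2^2] = \frac{2 p_\theta}{n^2}\sum_{\ell\in\mathcal{N}} n_\ell^2 b_\ell^2 = \frac{8 p_\theta \Xi^2 T^2}{n^2}\sum_{\ell\in\mathcal{N}} \frac{1}{\epsilon_\ell^2}.
\end{align*}
This second moment grows like $T^2$ and is the quantity that ultimately forces the $T^{-2}$ scaling of $\eta_k$.

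Since $\theta_{\max}=\infty$ removes the projection, I would write $a_k:=\mathbb{E}[\|\theta[k]-\theta^*\|_2^2]$, expand $\|\theta[k+1]-\theta^*\|_2^2$ via the update rule, and take expectations conditioned on the history. Combining the $L$-strong-convexity inequality $\langle\nabla f(\theta),\theta-\theta^*\rangle\geq (L/2)\|\theta-\theta^*\|^2 + (f(\theta)-f(\theta^*))$ with the Lipschitz-gradient bound $\|\nabla f(\theta)\|^2\leq\lambda^2\|\theta-\theta^*\|^2$ yields the key recursion
\begin{align*}
a_{k+1}\leq (1-L\eta_k+\lambda^2\eta_k^2)\,a_k - 2\eta_k\,\mathbb{E}[f(\theta[k])-f(\theta^*)] + \eta_k^2\sigma^2.
\end{align*}
For $T$ large enough that $\lambda^2\eta_k\leq L/2$ uniformly in $k$, the coefficient of $a_k$ is at most $1-L\eta_k/2$.

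I would then rearrange the recursion for $2\eta_k\mathbb{E}[f-f^*]$ and sum from $k=1$ to $T$; the telescoping discards a nonpositive $-(L/2)\sum_k\eta_k a_k$ and $-a_{T+1}$, leaving
\begin{align*}
\sum_{k=1}^T 2\eta_k\,\mathbb{E}[f(\theta[k])-f(\theta^*)]\leq a_1 + \sigma^2\sum_{k=1}^T\eta_k^2.
\end{align*}
Dividing by $\sum_k 2\eta_k$ and using $\min_k(\cdot)\leq$ any weighted average, I would substitute $\sum_{k=1}^T\eta_k = (\rho/T^2)\sum_{k=1}^T 1/k$ and $\sigma^2\sum_k\eta_k^2\leq \rho^2\pi^2\sigma^2/(6T^4)=O(\rho^2/T^2)$. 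The noise term then collapses to the ``slow-manifold'' value $\eta_k\sigma^2/L$ weighted against the $\eta_k$-mean, producing exactly $8\Xi^2\rho/(Ln^2)\sum_\ell 1/\epsilon_\ell^2$ (with $p_\theta$ absorbed by the definition of $\Xi$), while the initial-condition contribution $a_1/\sum_k \eta_k$ is the transient that is driven below $\varepsilon$ once $T$ is large enough relative to $a_1$, $\rho$ and $L$. Inequality~\eqref{tho:utility_strongconvex:eq2} will follow either by dropping the negative $-2\eta_k\mathbb{E}[f-f^*]$ term and iterating $a_{k+1}\leq(1-L\eta_k/2)a_k+\eta_k^2\sigma^2$ directly, or by composing \eqref{tho:utility_strongconvex:eq1} with the strong-convexity lower bound $\mathbb{E}[f-f^*]\geq(L/2)a_k$, which accounts for the extra $4/L$ factor between the two bounds.

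The principal obstacle will be reconciling the opposing $T$-scalings: $\sum_k\eta_k = \Theta(\rho\ln T/T^2)$ shrinks with $T$ while $\sigma^2=\Theta(T^2)$ grows with $T$, and only the particular choice $\eta_k=\rho/(T^2 k)$ makes the product $\eta_k\sigma^2$ settle at a $T$-independent value. Carrying out the bookkeeping cleanly enough to recover the exact constants $8$ and $32$, and to verify that the residual transient is $\leq\varepsilon$ for every $T\geq T(\varepsilon)$, will require tracking the $\lambda^2\eta_k^2 a_k$ correction and the harmonic-type sums carefully, and exploiting that the iterates concentrate near the slow manifold $a_k\approx \eta_k\sigma^2/L$ rather than near $\theta^*$ itself.
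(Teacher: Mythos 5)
Your reduction to the distance recursion $a_{k+1}\le(1-L\eta_k+\lambda^2\eta_k^2)a_k-2\eta_k\,\mathbb{E}[f(\theta[k])-f(\theta^*)]+\eta_k^2\sigma^2$ is correct, as is the noise computation $\sigma^2=8p_\theta\Xi^2T^2n^{-2}\sum_{\ell}\epsilon_\ell^{-2}$. The gap is in the final normalization step. After telescoping and dividing by $2\sum_{k=1}^T\eta_k$, the initialization term is $a_1/(2\sum_k\eta_k)$ with $\sum_{k=1}^T\eta_k=(\rho/T^2)\sum_{k=1}^T 1/k=\Theta(\rho\ln T/T^2)$, so this term is $\Theta\big(a_1T^2/(\rho\ln T)\big)$ and \emph{diverges} as $T\to\infty$; it cannot be ``driven below $\varepsilon$ once $T$ is large enough'' --- increasing $T$ makes it strictly worse, because the schedule $\rho/(T^2k)$ has vanishing total step length. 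For the same reason the iterates never reach the ``slow manifold'' $a_k\approx\eta_k\sigma^2/L$: the contraction factor satisfies $\prod_k(1-L\eta_k/2)\ge 1-(L/2)\sum_k\eta_k\to 1$, so the initial error essentially does not decay over the horizon. Note also that your weighted-average route gives a noise contribution $\sigma^2\sum_k\eta_k^2/(2\sum_k\eta_k)=O\big(\Xi^2\rho\, n^{-2}\sum_\ell\epsilon_\ell^{-2}/\ln T\big)$ carrying no factor $1/L$ at all, so it cannot reproduce the constant $8\Xi^2\rho/(Ln^2)$ in \eqref{tho:utility_strongconvex:eq1}; in the theorem that $1/L$ comes from strong convexity used in a completely different way.

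The paper's proof avoids dividing by $\sum_k\eta_k$ altogether. It applies the descent lemma to $f$ itself, obtaining $\mathbb{E}\{f(\theta[k+1])\}\le\mathbb{E}\{f(\theta[k])\}-\tfrac{\rho_k}{2}\mathbb{E}\{\|\nabla f(\theta[k])\|_2^2\}+\rho_k^2\sigma^2$ for $\rho_k\le1/\lambda$, defines $k_0$ as the first iteration at which $\mathbb{E}\{\|\nabla f(\theta[k])\|_2^2\}\le 2\rho_k\sigma^2+\varepsilon$ (such a $k_0$ must exist, since otherwise $\mathbb{E}\{f(\theta[k])\}$ would decrease by at least $\varepsilon\rho_k/2$ forever and $\sum_k\rho_k$ diverges), and then converts the small gradient at $k_0$ into small suboptimality via the Polyak--Lojasiewicz inequality $f(\theta)-f(\theta^*)\le\|\nabla f(\theta)\|_2^2/(2L)$; this is where both the factor $1/L$ and the cancellation $T^2\rho_{k_0}=\rho/k_0\le\rho$ enter, and \eqref{tho:utility_strongconvex:eq2} then follows from the pointwise inequality $\|\theta-\theta^*\|_2^2\le 4(f(\theta)-f(\theta^*))/L$. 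To repair your argument you would need to switch to this gradient-norm/stopping-time style (or change the step-size schedule), because any bound of the form $(a_1+\text{noise})/\sum_k\eta_k$ is defeated by $\sum_k\eta_k\to0$ under the $\rho/(T^2k)$ schedule.
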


\begin{proof} See Appendix~\ref{proof:tho:utility_strongconvex}.
\end{proof}

Theorem~\ref{tho:utility_strongconvex} establishes the convergence of Algorithm~\ref{alg:0} for smooth strongly convex functions. This quantifies the \textit{trade-off between privacy and utility} by capturing the closeness to the trained ML model with and without taking into account the privacy constraints of the data owners. In fact, the inequalities in~\eqref{tho:utility_strongconvex:eq1} and~\eqref{tho:utility_strongconvex:eq2} enable us to predict the outcome of a potential collaboration among privacy-aware data owners (or data custodians) in terms of the fitness cost of the ML training model prior to executing potentially computationally-expensive ML algorithms on distributed privately-owned datasets.

To relax the conditions required for convergence of the ML training, we can use Algorithm~\ref{alg:1}. In this case, we do not even need the fitness function to be differentiable because the algorithm uses  sub-gradients, rather than gradients. For the noisy projected sub-gradient decent algorithm in Algorithm~\ref{alg:1}, the following result can be proved.

\begin{theorem} \label{tho:2} For any $T$, there exists large enough constants\footnote{Note that the constants in the statement of the theorem can be functions of $T$ and, therefore, the bounds in~\eqref{eqn:1} and~\eqref{eqn:2} are useful for comparing the variations in the performance of the sub-gradient descent algorithm for various privacy budgets and sizes of the datasets as long as $T$ is fixed.  } $c_1,c_2>0$ such that the iterates of Algorithm~\ref{alg:1} satisfy
\begin{align} \label{eqn:1}
\mathbb{E}\{f(\bar{\theta}[T])\}-f(\theta^*)&\leq  \frac{c_2\Xi}{n}\sqrt{\sum_{\ell\in\mathcal{N}}\frac{1}{\epsilon_\ell^2}},
\end{align}
Further, if $g_1$ is a $L$-strongly convex function, 
\begin{align}
 \label{eqn:2}
\mathbb{E}\bigg\{
\big\|\bar{\theta}[T]-\theta^*\big\|_2^2\bigg\}&\leq 
\frac{4c_2\Xi}{Ln }\sqrt{\sum_{\ell\in\mathcal{N}}\frac{1}{\epsilon_\ell^2}}.
\end{align}
\end{theorem}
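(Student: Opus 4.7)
My plan is to adapt the analysis of Shamir and Zhang (2013) for stochastic projected sub-gradient descent with weighted averaging to the differentially-private setting here. The first step is to quantify the aggregate noise $w[k] := (1/n)\sum_{\ell\in\mathcal{N}} n_\ell w_\ell[k]$ that enters the update. By Theorem~\ref{tho:1}, each $w_\ell[k]$ is a multivariate Laplace with scale $b_\ell = 2\Xi T/(n_\ell\epsilon_\ell)$, so each coordinate has variance $2b_\ell^2$; independence across $\ell$ (guaranteed by Assumption~\ref{assum:exclusive}) and across $k$ yields $\mathbb{E}[w[k]\mid\theta[k]]=0$ and $\mathbb{E}\|w[k]\|_2^2 = 8 p_\theta \Xi^2 T^2/n^2 \cdot \sum_{\ell}1/\epsilon_\ell^2$. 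Combined with Assumption~\ref{assum:maximugradient}, this makes $\xi_f(\theta[k]) + w[k]$ a conditionally unbiased estimator of a sub-gradient of $f$ with second-moment bound $G^2 = C_1\Xi^2 + C_2(\Xi T/n)^2\sum_\ell 1/\epsilon_\ell^2$ for absolute constants $C_1,C_2$.

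Given this, I would run the standard noisy projected sub-gradient calculation. Non-expansiveness of $\Pi_\Theta$, convexity of $f$ (Assumption~\ref{assum:convex}), and the vanishing of the noise cross-term in expectation give the per-iteration inequality
\begin{equation*}
2\rho_k\bigl(\mathbb{E}\{f(\theta[k])\}-f(\theta^*)\bigr) \leq \mathbb{E}\{\|\theta[k]-\theta^*\|_2^2\} - \mathbb{E}\{\|\theta[k+1]-\theta^*\|_2^2\} + \rho_k^2 G^2,
\end{equation*}
with $\rho_k = c_1/\sqrt{k}$. The main work is combining these bounds through the averaging rule~\eqref{eqn:noisy_subgradient_averaged}. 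Unrolling the recursion $\bar\theta[k+1] = (1-\alpha_k)\bar\theta[k]+\alpha_k\theta[k]$ with $\alpha_k = (1/\sqrt{T}+1)/(1/\sqrt{T}+k)$ expresses $\bar\theta[T]$ as a convex combination $\sum_{k=1}^{T-1}\gamma_k\theta[k]$ with $\gamma_k = \alpha_k\prod_{j=k+1}^{T-1}(1-\alpha_j)$, and Jensen's inequality gives $\mathbb{E}\{f(\bar\theta[T])\}-f(\theta^*)\leq\sum_k\gamma_k(\mathbb{E}\{f(\theta[k])\}-f(\theta^*))$. Substituting the per-step inequality and using the ratio identity $\gamma_k/(2\rho_k) - \gamma_{k+1}/(2\rho_{k+1})$ for the weighted distance-squared terms produces a telescoping sum. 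The $1/\sqrt{T}$ offset in $\alpha_k$ is precisely the ``appropriate change'' alluded to in Section~I: it is tuned so that, with $\rho_k = c_1/\sqrt{k}$, the telescoping cancels cleanly without the $\log T$ factor that standard suffix averaging would introduce, yielding a bound of order $G\sqrt{T}\,\mathrm{diam}(\Theta)/n$ after optimizing the constant $c_1$. Substituting $G$ and absorbing all $T$-dependent factors into $c_2$ (as explicitly permitted by the footnote) gives~\eqref{eqn:1}.

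The second bound~\eqref{eqn:2} then follows at once. Under Assumption~\ref{assum:convex}, if $g_1$ is $L$-strongly convex, then so is $f$, and optimality of $\theta^*$ implies $f(\bar\theta[T])-f(\theta^*) \geq (L/2)\|\bar\theta[T]-\theta^*\|_2^2$. Taking expectations, rearranging, and substituting~\eqref{eqn:1} gives~\eqref{eqn:2} with the factor $4c_2\Xi/(Ln)$. The main obstacle I anticipate is the weighted telescoping: verifying that the specific weights $\gamma_k$ produced by $\alpha_k=(1/\sqrt{T}+1)/(1/\sqrt{T}+k)$ interact with $\rho_k = c_1/\sqrt{k}$ to give uniformly bounded coefficients on the squared-distance terms requires delicate bookkeeping and is where the Shamir–Zhang argument must be genuinely modified from its original suffix-averaging form. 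Everything else (the noise variance computation, the one-step descent inequality, and the strong-convexity passage) is standard once the noise model of Theorem~\ref{tho:1} is in hand.
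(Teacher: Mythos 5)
Your plan follows essentially the same route as the paper: both adapt the Shamir--Zhang weighted-averaging analysis to the DP-noise setting, and the ``delicate bookkeeping'' you flag---controlling how the averaging weights $\gamma_k=\alpha_k\prod_{j>k}(1-\alpha_j)$ interact with $\rho_k=c_1/\sqrt{k}$ in the telescoping---is precisely the single lemma the paper actually proves, namely $\zeta_k-\zeta_{k-1}\le 2/(\sqrt{T}\,T(T+1))$ where $\zeta_k$ is exactly your $\gamma_k$. The strong-convexity passage to the second bound also matches the paper (your first-order optimality argument even yields a constant $2/L$ in place of the paper's $4/L$ from the midpoint argument, which is harmless since it implies the stated bound).
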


\begin{proof} See Appendix~\ref{proof:tho:2}.
\end{proof}

The upper bounds on the performance of the training Algorithms~\ref{alg:0} and~\ref{alg:1} in Theorems~\ref{tho:utility_strongconvex} and~\ref{tho:2} are increasing functions of  $(1/n^2)\sum_{\ell\in\mathcal{N}}1/(\epsilon_\ell)^2$ and $(1/n)[\sum_{\ell\in\mathcal{N}}1/(\epsilon_\ell)^2]^{1/2}$, respectively. By increasing $\epsilon_\ell$, i.e., relaxing the privacy guarantees of data owners, the performance of the ML training algorithm improves, as expected because of having access to better quality gradient oracles. 

\farhad{
\begin{remark}[Comparison with Central Bounds]
Under the assumption that all the data owners have equal privacy budgets $\epsilon_i=\epsilon$, $\forall i$, the bound in~\eqref{tho:utility_strongconvex:eq1} scales as $\epsilon^{-2}$ and the bound in~\eqref{eqn:1} scales as $\epsilon^{-1}$. These bounds are in line with the lower and the upper bounds in~\cite{bassily2014private} for strongly convex and general convex loss functions. The same outcome also holds if $N=1$ and $\epsilon_1=\epsilon$, which is the case of centralized privacy-preserving learning. 
\end{remark}
}

\farhad{Finally, we note that these results provide bounds on the distance between the non-private ML model and the privacy-preserving ML models learned in a distributed manner as a function of the privacy budgets and the size of the datasets. Issues, such as non-independent and non-identical datasets, influence the performance of the non-private model and thus also indirectly influence the performance of the privacy-preserving models. In the next section, although the datasets are not restricted be i.i.d. (e.g., the number of fraudulent transactions in the credit card fraud detection is low and arguably contains activities \farhad{that have} originated from same/similar fraudsters), the theoretical bounds tightly match the experimental results.}

\begin{figure*}
\centering
\begin{tabular}{ccc}
\hspace{-.25in}
$\epsilon_1=\epsilon_2=\epsilon_3=0.1$
\hspace{-.3in}
&
\hspace{-.2in}
$\epsilon_1=\epsilon_2=\epsilon_3=1.0$
\hspace{-.3in}
&
\hspace{-.2in}
$\epsilon_1=\epsilon_2=\epsilon_3=10$
\hspace{-.2in}
\\[-.5em]
\hspace{-.25in}
\begin{tikzpicture}
\node[] at (0,0) {
\psfrag{x1}[cc][][0.6][0]{\quad\quad$10^0$}
\psfrag{x2}[cc][][0.6][0]{$10^{1}$}
\psfrag{x3}[cc][][0.6][0]{$10^{2}$}
\psfrag{yyy1}[cl][bl][0.6][45]{$10^{-1}$}
\psfrag{yyy2}[cc][][0.6][45]{$10^{0}$}
\psfrag{yyy3}[cc][][0.6][45]{$10^{1}$}
\psfrag{yyy4}[cc][][0.6][45]{$10^{2}$}
\psfrag{yyy5}[cc][][0.6][45]{$10^{3}$}
\includegraphics[width=.35\linewidth]{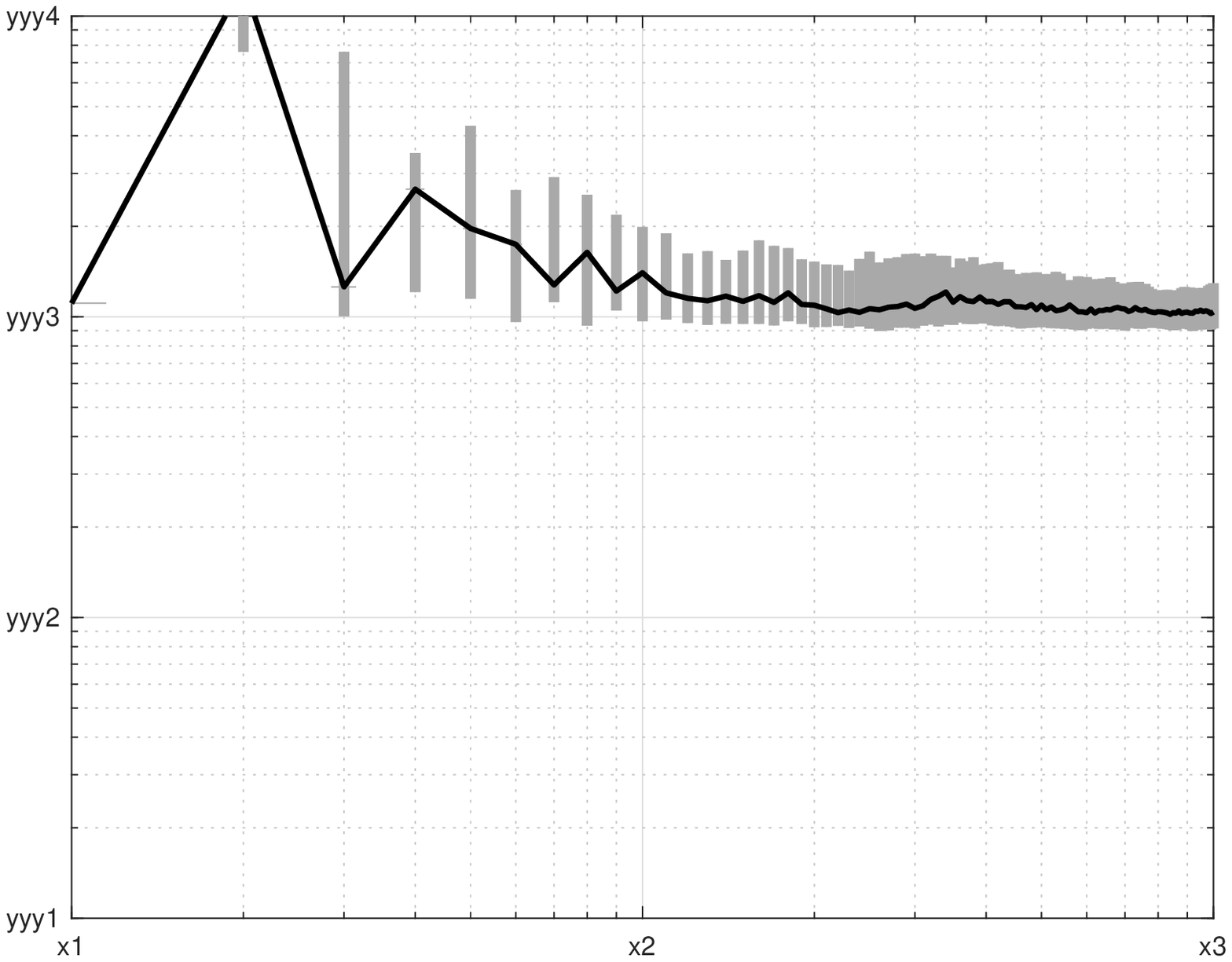}};
\node[] at (0,-2.3) {\footnotesize $k$};
\node[rotate=90] at (-2.9,0) {\footnotesize $\psi(\bar{\theta}[k])$};
\end{tikzpicture}
\hspace{-.3in}
&
\hspace{-.2in}
\begin{tikzpicture}
\node[] at (0,0) {
\psfrag{x1}[cc][][0.6][0]{\quad\quad$10^0$}
\psfrag{x2}[cc][][0.6][0]{$10^{1}$}
\psfrag{x3}[cc][][0.6][0]{$10^{2}$}
\psfrag{yyy1}[cl][bl][0.6][45]{$10^{-1}$}
\psfrag{yyy2}[cc][][0.6][45]{$10^{0}$}
\psfrag{yyy3}[cc][][0.6][45]{$10^{1}$}
\psfrag{yyy4}[cc][][0.6][45]{$10^{2}$}
\psfrag{yyy5}[cc][][0.6][45]{$10^{3}$}
\includegraphics[width=.35\linewidth]{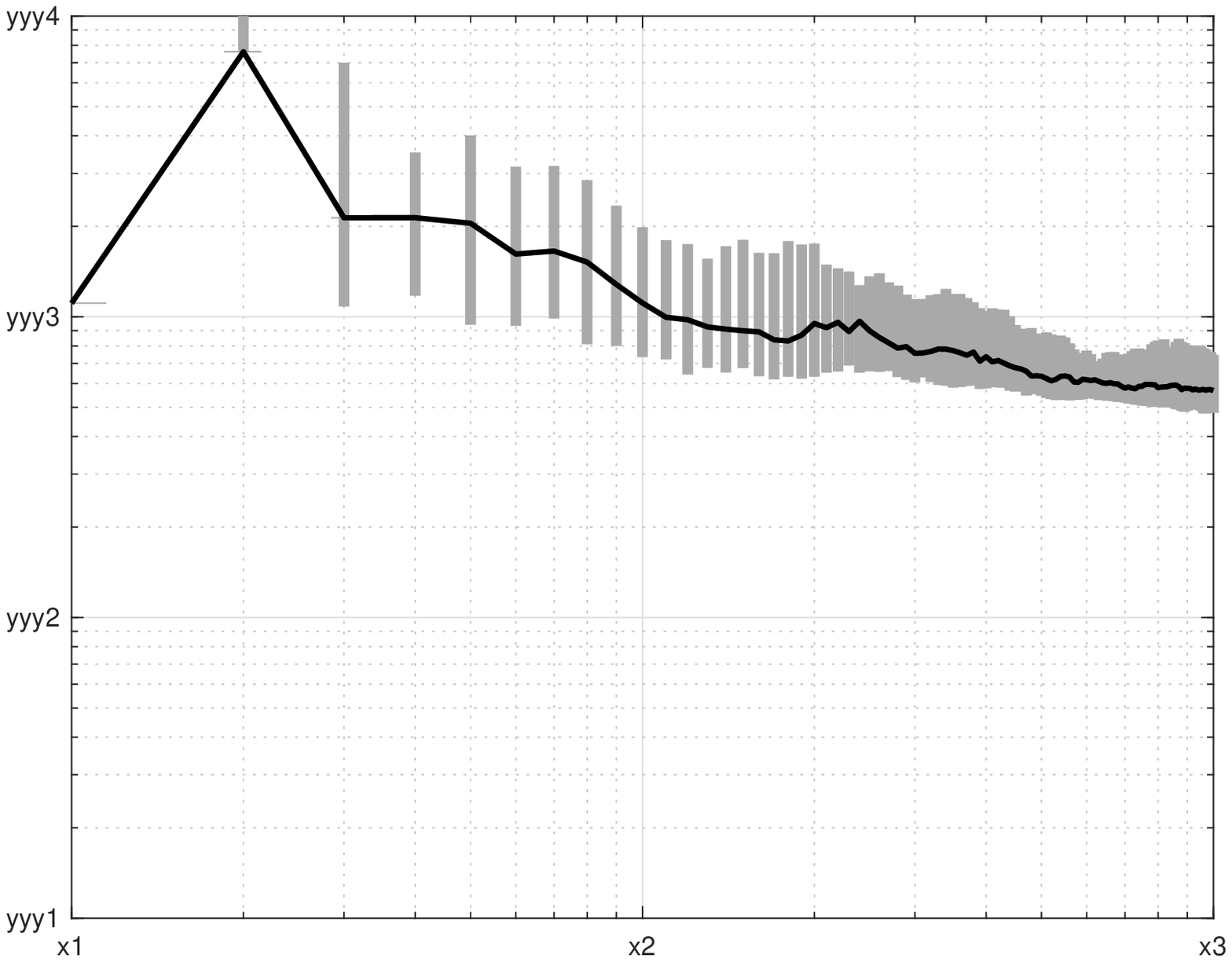}};
\node[] at (0,-2.3) {\footnotesize $k$};
\node[rotate=90] at (-2.9,0) {\footnotesize $\psi(\bar{\theta}[k])$};
\end{tikzpicture}
\hspace{-.3in}
&
\hspace{-.2in}
\begin{tikzpicture}
\node[] at (0,0) {
\psfrag{x1}[cc][][0.6][0]{\quad\quad$10^0$}
\psfrag{x2}[cc][][0.6][0]{$10^{1}$}
\psfrag{x3}[cc][][0.6][0]{$10^{2}$}
\psfrag{yyy1}[cl][bl][0.6][45]{$10^{-1}$}
\psfrag{yyy2}[cc][][0.6][45]{$10^{0}$}
\psfrag{yyy3}[cc][][0.6][45]{$10^{1}$}
\psfrag{yyy4}[cc][][0.6][45]{$10^{2}$}
\psfrag{yyy5}[cc][][0.6][45]{$10^{3}$}
\includegraphics[width=.35\linewidth]{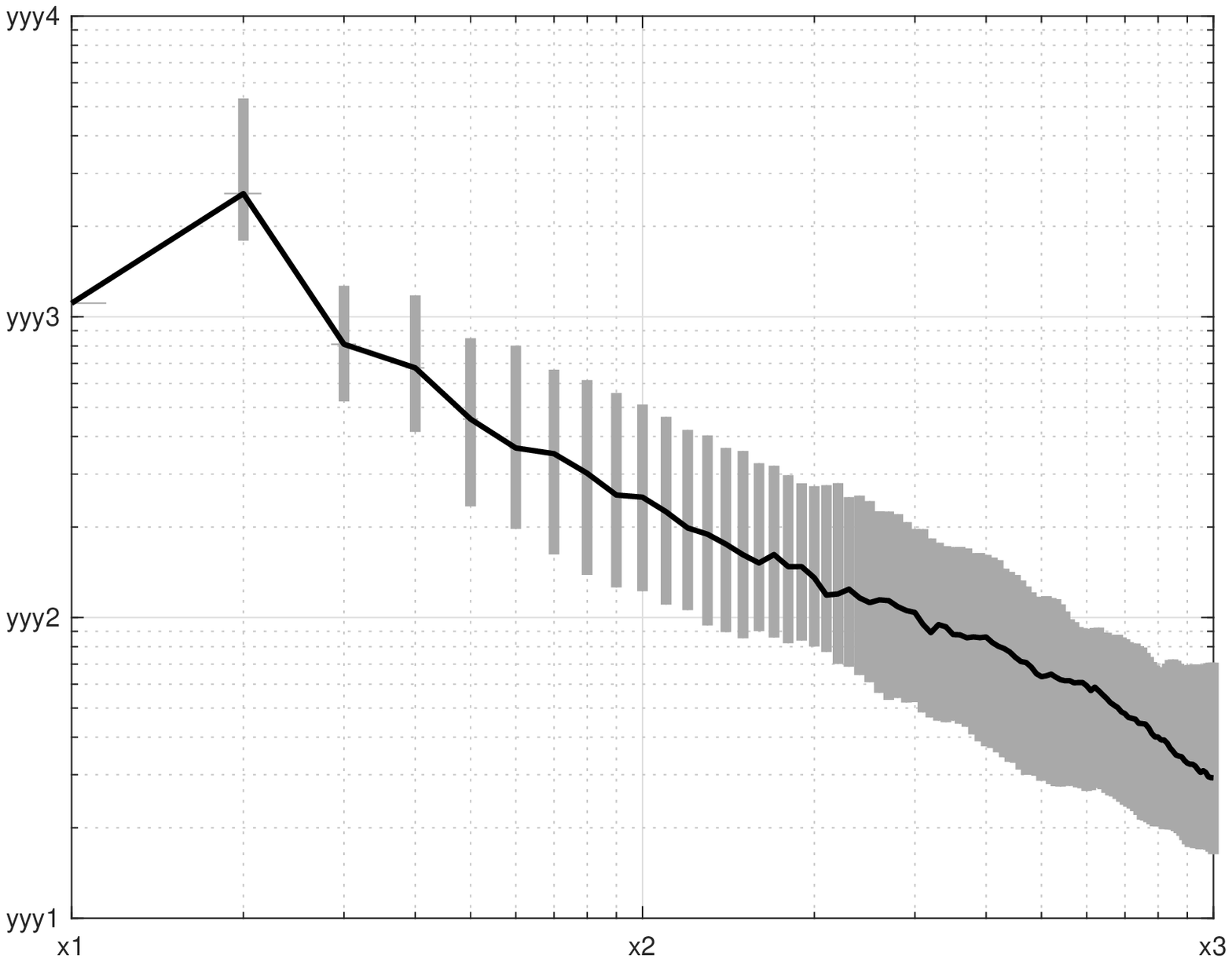}};
\node[] at (0,-2.3) {\footnotesize $k$};
\node[rotate=90] at (-2.9,0) {\footnotesize $\psi(\bar{\theta}[k])$};
\end{tikzpicture}
\end{tabular}
\vspace{-.2in}
\caption{
\label{fig:a0}  
Statistics of relative fitness of the stochastic gradient method in Algorithm~\ref{alg:1} for learning lending interest rates versus the iteration number for $T=100$ with various choices of privacy budgets. The boxes, i.e., the vertical lines at each iterations, illustrate the range of 25\% to 75\% percentiles for extracted from a hundred runs of the algorithm and the black lines show the median relative fitness. }
\end{figure*}

\begin{figure}[t]
\centering
\begin{tikzpicture}
\node[] at (0,0) {
\psfrag{x1}[cc][][0.7][0]{$10^{3}$}
\psfrag{x2}[cc][][0.7][0]{$10^{4}$}
\psfrag{x3}[cc][][0.7][0]{$10^{5}$}
\psfrag{y1}[cc][][0.7][0]{$10^{-1}$\quad}
\psfrag{y2}[cc][][0.7][0]{$10^{0}$\quad}
\psfrag{y3}[cc][][0.7][0]{$10^{1}$\quad}
\psfrag{z1}[cc][][0.7][0]{$10^{0}$\quad}
\psfrag{z2}[cc][][0.7][0]{$10^{5}$\quad}
\psfrag{z3}[cc][][0.7][0]{$10^{10}$\quad}
\includegraphics[width=.75\columnwidth]{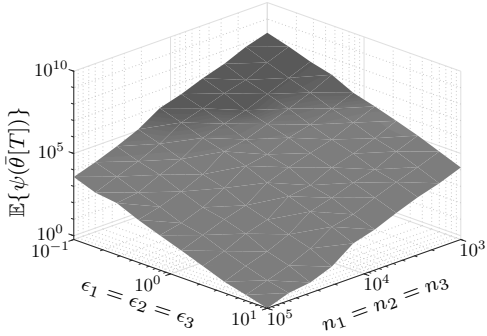}};
\node[rotate=20] at (+1.7,-1.9) {\footnotesize $n_1=n_2=n_3$};
\node[rotate=-18] at (-1.6,-1.9) {\footnotesize $\epsilon_1=\epsilon_2=\epsilon_3$};
\node[rotate=90] at (-3.2,0) {\footnotesize $\mathbb{E}\{\psi(\bar{\theta}[T])\}$};
\end{tikzpicture}
\vspace{-.1in}
\caption{
\label{fig:a1} Relative fitness of the stochastic gradient method in Algorithm~\ref{alg:1} for learning lending interest rates after $T=100$ iterations versus the size of the datasets and the privacy budgets. }
\end{figure}

\section{Experimental Validation of the Performance of ML on Distributed Private Data} \label{sec:numerical}
In this section, we examine the results of the paper, specifically the performance of Algorithm~\ref{alg:1}, on two financial datasets on lending and credit card fraud. Particularly, we use the relative fitness of the iterates in Algorithm~\ref{alg:1} to illustrate its performance. The relative fitness of $\theta$ is given by
\begin{align}
    \psi(\theta):=\frac{f(\theta)}{f(\theta^*)}-1.
\end{align}
This measure shows how good $\theta$ is in comparison to the optimal ML model $\theta^*$ in terms of the training cost in~\eqref{eqn:ML}. We opt for studying the relative fitness, scaled by $f(\theta^*)$ as opposed as the absolute fitness $f(\theta)-f(\theta^*)$, because we consider datasets with different sizes for two distinct ML learning models and thus we want to factor out the effects of the variations of $f(\theta^*)$. Finally, note that, by construction, $\psi(\theta)\geq 0$. Further, the lower the value of $\psi(\theta)$, the better $\theta$ performs in comparison to $\theta^*$. \farhad{In what follows, we use Algorithm~\ref{alg:0} with $\epsilon=+\infty$ for non-private learning of $\theta^*$; this is equivalent to setting the magnitude of the additive privacy-preserving noise in the gradients to zero.}

\subsection{Lending Dataset}
First, we use a lending dataset with a linear regression model to demonstrate the value of the methodology and to validate the theoretical results. 

\begin{figure}[t]
\centering
\begin{tikzpicture}
\node[] at (0,0) {
\psfrag{x1}[cc][][0.6][0]{$10^{-1}$}
\psfrag{x2}[cc][][0.6][0]{$10^{0}$}
\psfrag{x3}[cc][][0.6][0]{$10^{1}$}
\psfrag{y1}[cc][][0.6][0]{$10^{-1}$\quad}
\psfrag{y2}[cc][][0.6][0]{$10^{1}$\quad}
\psfrag{y3}[cc][][0.6][0]{$10^{3}$\quad}
\psfrag{y4}[cc][][0.6][0]{$10^{5}$\quad}
\psfrag{y5}[cc][][0.6][0]{$10^{7}$\quad}
\psfrag{y6}[cc][][0.6][0]{$10^{9}$\quad}
\psfrag{aaaaaaaaaa1}[l][t][0.6][0]{\small \hspace{-.299in}\raisebox{-10pt}{$n=10^3$}}
\psfrag{aaaaaaaaaa2}[l][t][0.6][0]{\small \hspace{-.31in}\raisebox{-10pt}{$n=10^4$}}
\psfrag{aaaaaaaaaa3}[l][t][0.6][0]{\small \hspace{-.3in}\raisebox{-10pt}{$n=10^5$}}
\includegraphics[width=.75\columnwidth]{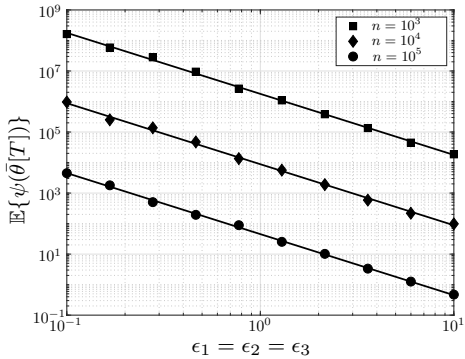}};
\node[] at (0,-2.4) {\footnotesize $\epsilon_1=\epsilon_2=\epsilon_3$};
\node[rotate=90] at (-3.1,0) {\footnotesize $\mathbb{E}\{\psi(\bar{\theta}[T])\}$};
\end{tikzpicture}
\vspace{-.1in}
\caption{
\label{fig:a3} Relative fitness of the stochastic gradient method in Algorithm~\ref{alg:1} for learning lending interest rates after $T=100$ iterations versus the privacy budgets. The solid line illustrate the bound in Theorem~\ref{tho:utility_strongconvex}.  }
\end{figure}

\begin{figure}
\centering
\begin{tikzpicture}
\node[] at (0,0) {
\psfrag{x1}[cc][][0.6][0]{$10^{-1}$}
\psfrag{x2}[cc][][0.6][0]{$10^{0}$}
\psfrag{x3}[cc][][0.6][0]{$10^{1}$}
\psfrag{y1}[cc][][0.6][0]{$10^{-1}$\quad}
\psfrag{y2}[cc][][0.6][0]{$10^{1}$\quad}
\psfrag{y3}[cc][][0.6][0]{$10^{3}$\quad}
\psfrag{y4}[cc][][0.6][0]{$10^{5}$\quad}
\psfrag{y5}[cc][][0.6][0]{$10^{7}$\quad}
\psfrag{y6}[cc][][0.6][0]{$10^{9}$\quad}
\psfrag{aaaaaaa1}[l][t][0.6][0]{\small \hspace{-.24in}\raisebox{-10pt}{$\epsilon=0.1$}}
\psfrag{aaaaaaa2}[l][t][0.6][0]{\small \hspace{-.24in}\raisebox{-10pt}{$\epsilon=1$}}
\psfrag{aaaaaaa3}[l][t][0.6][0]{\small \hspace{-.24in}\raisebox{-10pt}{$\epsilon=10$}}
\includegraphics[width=.75\columnwidth]{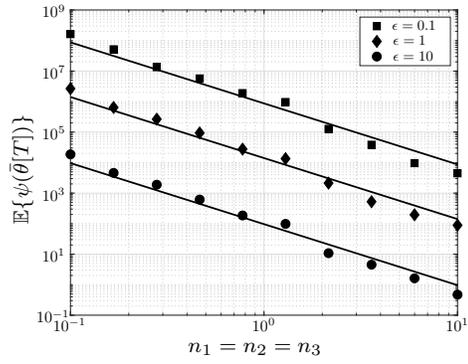}};
\node[] at (0,-2.4) {\footnotesize $n_1=n_2=n_3$};
\node[rotate=90] at (-3.1,0) {\footnotesize $\mathbb{E}\{\psi(\bar{\theta}[T])\}$};
\end{tikzpicture}
\vspace{-.1in}
\caption{\label{fig:a4} Relative fitness of the stochastic gradient method in Algorithm~\ref{alg:1} for learning lending interest rates after $T=100$ iterations versus the size of the datasets. The solid line illustrate the bound in Theorem~\ref{tho:utility_strongconvex}.  }
\end{figure}

\begin{figure*}
\centering
\begin{tabular}{cccc}
\hspace{-.2in}
\footnotesize 
Scenario 1:
\hspace{-.2in}
&
\hspace{-.2in}
\footnotesize 
Scenario 2:
\hspace{-.2in}
&
\hspace{-.2in}
\footnotesize 
Scenario 3:
\hspace{-.2in}
&
\hspace{-.2in}
\footnotesize 
Scenario 4:\\[-.5em]
\hspace{-.2in}
\footnotesize 
$n_2=n_3=10^3$
\hspace{-.2in}
&
\hspace{-.2in}
\footnotesize 
$n_2=n_3=10^3$
\hspace{-.2in}
&
\hspace{-.2in}
\footnotesize 
$n_2=n_3=10^5$
\hspace{-.2in}
&
\hspace{-.2in}
\footnotesize 
$n_2=n_3=10^5$\\[-.4em]
\hspace{-.2in}
\footnotesize 
(small dataset)
\hspace{-.2in}
&
\hspace{-.2in}
\footnotesize 
(small dataset)
\hspace{-.2in}
&
\hspace{-.2in}
\footnotesize 
(large dataset)
\hspace{-.2in}
&
\hspace{-.2in}
\footnotesize 
(large dataset)\\[-.5em]
\hspace{-.2in}
\footnotesize 
$\epsilon_2=\epsilon_3=0.1$ 
\hspace{-.2in}
&
\hspace{-.2in}
\footnotesize 
$\epsilon_2=\epsilon_3=10$ 
\hspace{-.2in}
&
\hspace{-.2in}
\footnotesize 
$\epsilon_2=\epsilon_3=0.1$ 
\hspace{-.2in}
&
\hspace{-.2in}
\footnotesize 
$\epsilon_2=\epsilon_3=10$ \\[-.5em]
\hspace{-.2in}
\footnotesize 
(small privacy budget)
\hspace{-.2in}
&
\hspace{-.2in}
\footnotesize 
(large privacy budget)
\hspace{-.2in}
&
\hspace{-.2in}
\footnotesize 
(small privacy budget)
\hspace{-.2in}
&
\hspace{-.2in}
\footnotesize 
(large privacy budget) \\[-.5em]
\hspace{-.2in}
\begin{tikzpicture}
\node[] at (0,0) {
\psfrag{x1}[cc][][0.6][0]{$10^3$}
\psfrag{x2}[cc][][0.6][0]{$10^4$}
\psfrag{x3}[cc][][0.6][0]{\quad$10^5$}
\psfrag{y1}[cc][bl][0.6][0]{\;$10^{-1}$}
\psfrag{y2}[cc][][0.6][0]{$10^{0}$}
\psfrag{y3}[cc][][0.6][0]{$10^{1}$\quad}
\psfrag{z1}[cc][t][0.6][0]{$10^{4}$\quad}
\psfrag{z2}[cc][][0.6][0]{$10^{5}$\quad}
\psfrag{z3}[cc][][0.6][0]{$10^{6}$\quad}
\psfrag{z4}[cc][][0.6][0]{$10^{7}$\quad}
\includegraphics[width=.25\linewidth]{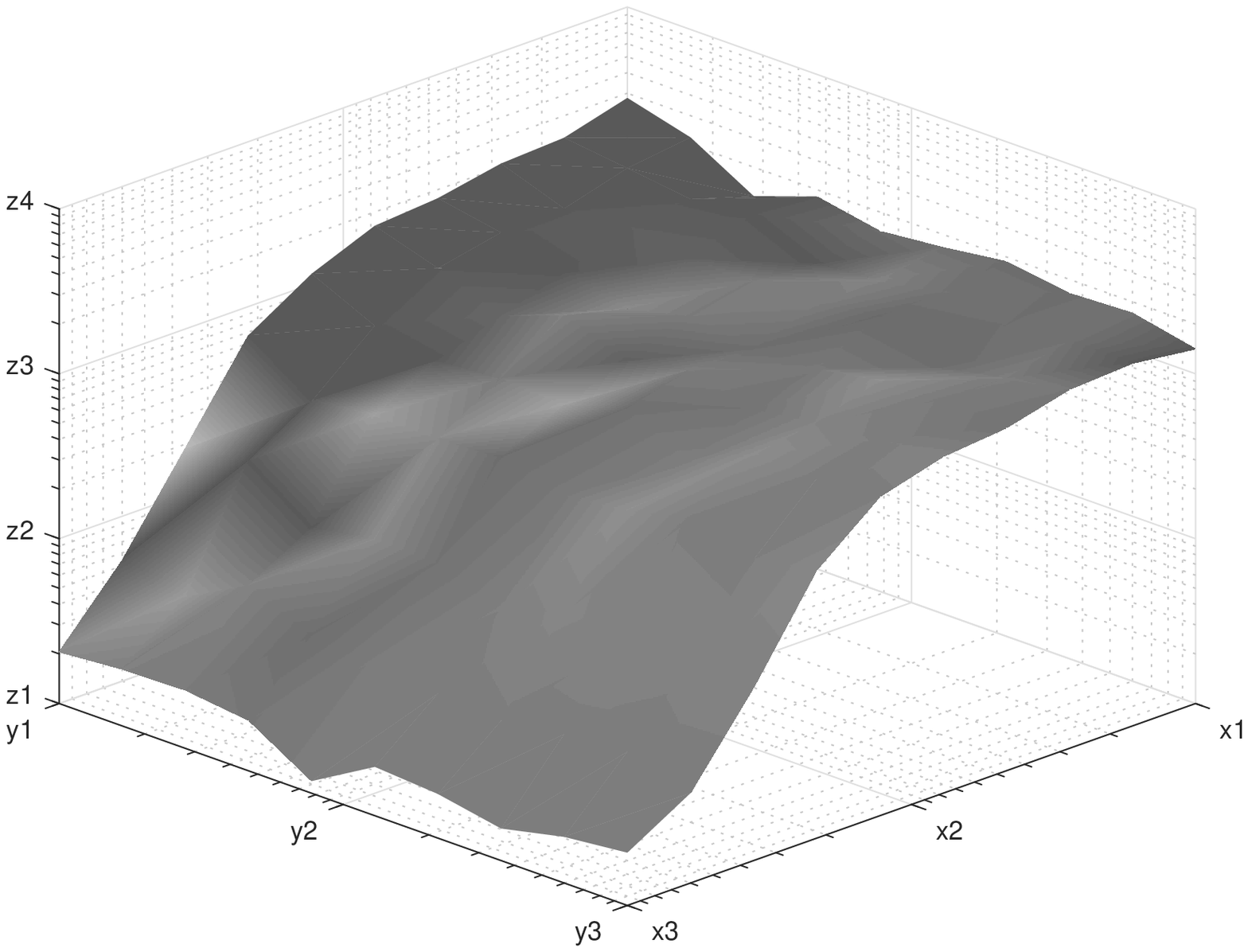}};
\node[] at (-1.1,-1.4) {\footnotesize $\epsilon_1$};
\node[] at (+1.2,-1.4) {\footnotesize $n_1$};
\node[rotate=90] at (-2.2,0) {\footnotesize $\mathbb{E}\{\psi(\bar{\theta}[T])\}$};
\end{tikzpicture}
\hspace{-.2in}
&
\hspace{-.2in}
\begin{tikzpicture}
\node[] at (0,0) {
\psfrag{x1}[cc][][0.6][0]{$10^3$}
\psfrag{x2}[cc][][0.6][0]{$10^4$}
\psfrag{x3}[cc][][0.6][0]{\quad$10^5$}
\psfrag{y1}[cc][bl][0.6][0]{\;$10^{-1}$}
\psfrag{y2}[cc][][0.6][0]{$10^{0}$}
\psfrag{y3}[cc][][0.6][0]{$10^{1}$\quad}
\psfrag{z1}[cc][t][0.6][0]{$10^{0}$\quad}
\psfrag{z2}[cc][][0.6][0]{$10^{2}$\quad}
\psfrag{z3}[cc][][0.6][0]{$10^{4}$\quad}
\psfrag{z4}[cc][][0.6][0]{$10^{6}$\quad}
\psfrag{z5}[cc][][0.6][0]{$10^{8}$\quad}
\includegraphics[width=.25\linewidth]{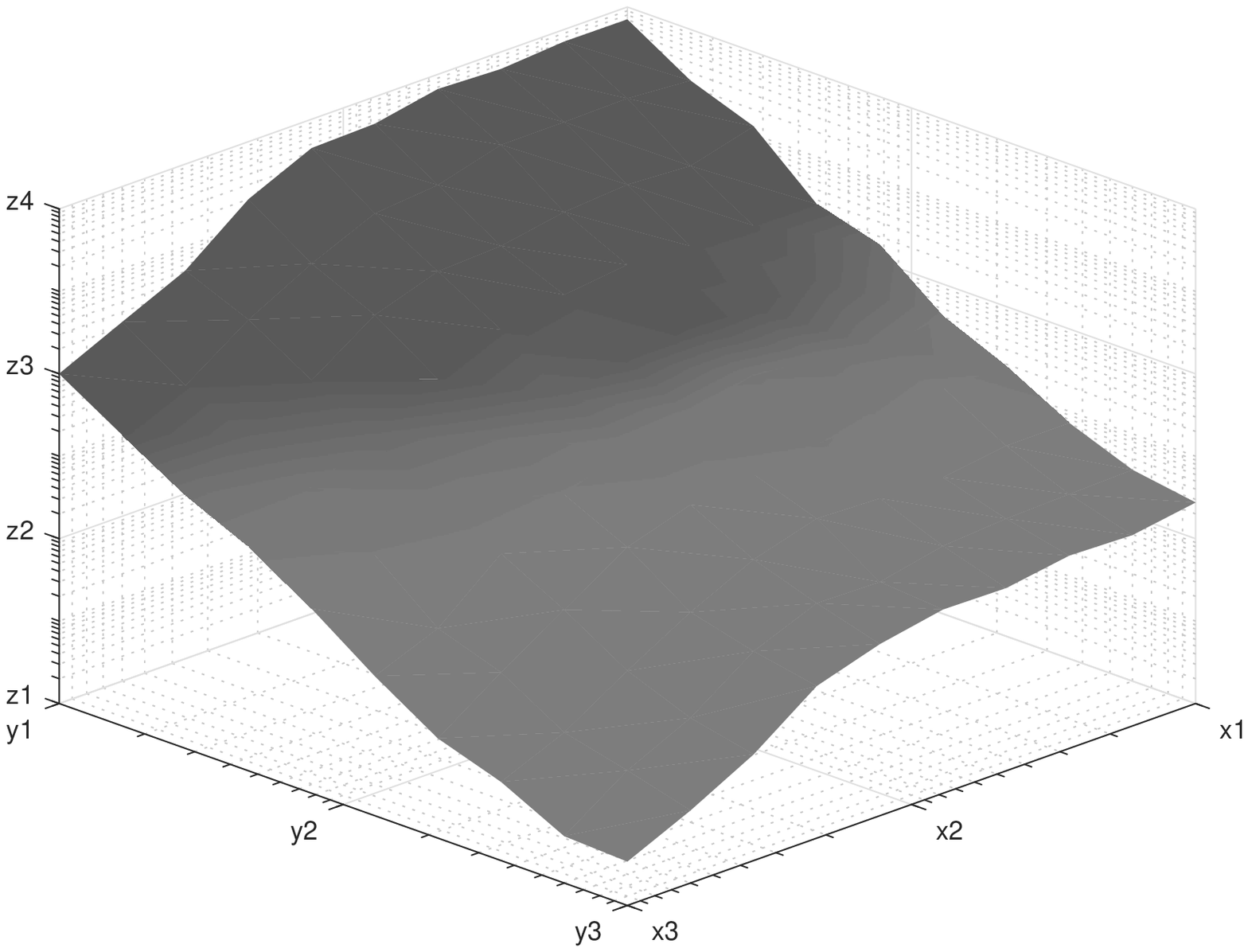}};
\node[] at (-1.1,-1.4) {\footnotesize $\epsilon_1$};
\node[] at (+1.2,-1.4) {\footnotesize $n_1$};
\node[rotate=90] at (-2.2,0) {\footnotesize $\mathbb{E}\{\psi(\bar{\theta}[T])\}$};
\end{tikzpicture}
\hspace{-.2in}
&
\hspace{-.2in}
\begin{tikzpicture}
\node[] at (0,0) {
\psfrag{x1}[cc][][0.6][0]{$10^3$}
\psfrag{x2}[cc][][0.6][0]{$10^4$}
\psfrag{x3}[cc][][0.6][0]{\quad$10^5$}
\psfrag{y1}[cc][bl][0.6][0]{\;$10^{-1}$}
\psfrag{y2}[cc][][0.6][0]{$10^{0}$}
\psfrag{y3}[cc][][0.6][0]{$10^{1}$\quad}
\psfrag{z1}[cc][t][0.6][0]{$10^{3}$\quad}
\psfrag{z2}[cc][][0.6][0]{$10^{4}$\quad}
\includegraphics[width=.25\linewidth]{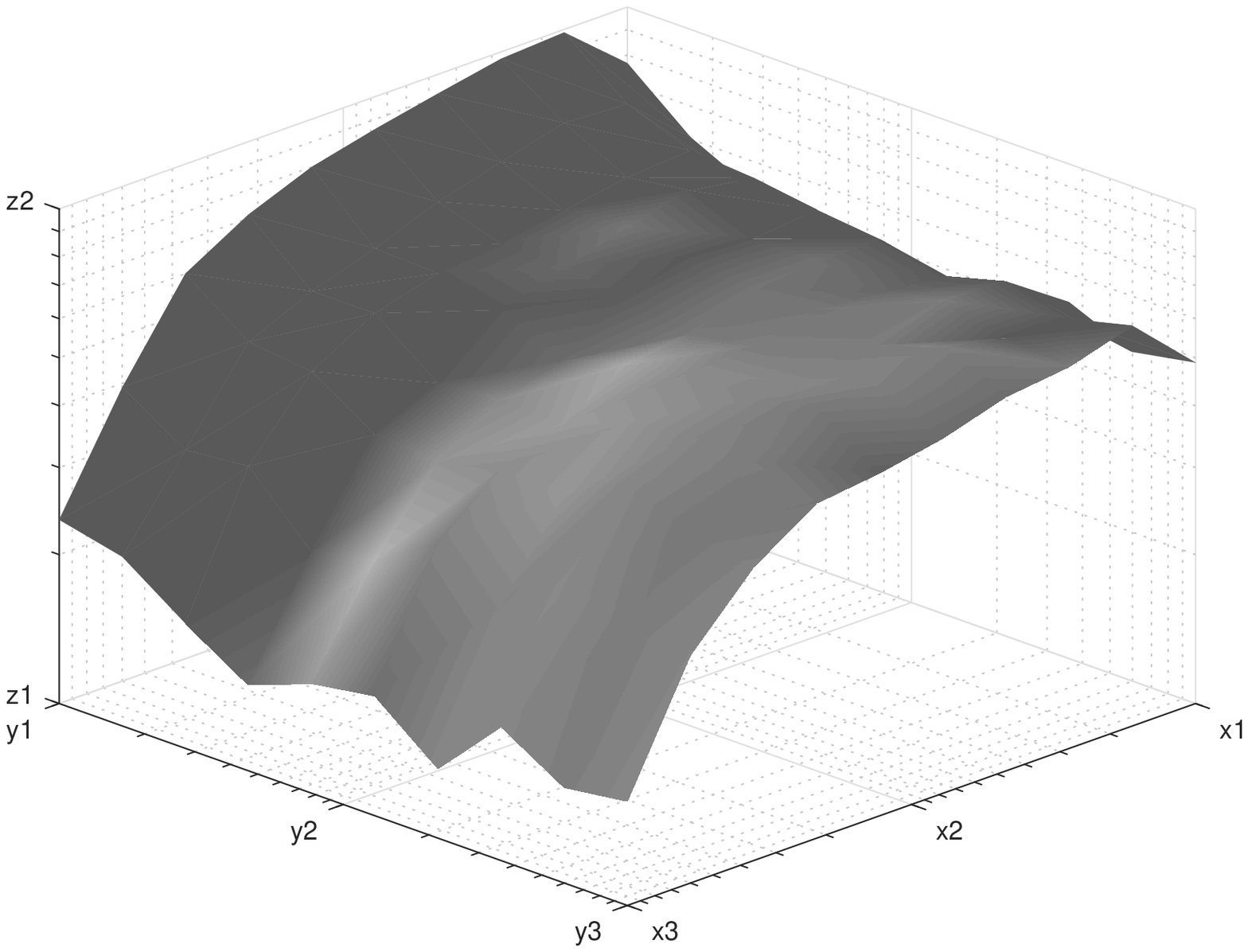}};
\node[] at (-1.1,-1.4) {\footnotesize $\epsilon_1$};
\node[] at (+1.2,-1.4) {\footnotesize $n_1$};
\node[rotate=90] at (-2.2,0) {\footnotesize $\mathbb{E}\{\psi(\bar{\theta}[T])\}$};
\end{tikzpicture}
\hspace{-.2in}
&
\hspace{-.2in}
\begin{tikzpicture}
\node[] at (0,0) {
\psfrag{x1}[cc][][0.6][0]{$10^3$}
\psfrag{x2}[cc][][0.6][0]{$10^4$}
\psfrag{x3}[cc][][0.6][0]{\quad$10^5$}
\psfrag{y1}[cc][bl][0.6][0]{\;$10^{-1}$}
\psfrag{y2}[cc][][0.6][0]{$10^{0}$}
\psfrag{y3}[cc][][0.6][0]{$10^{1}$\quad}
\psfrag{z1}[cc][t][0.6][0]{$10^{0}$\quad}
\psfrag{z2}[cc][][0.6][0]{$10^{1}$\quad}
\psfrag{z3}[cc][][0.6][0]{$10^{2}$\quad}
\psfrag{z4}[cc][][0.6][0]{$10^{3}$\quad}
\psfrag{z5}[cc][][0.6][0]{$10^{4}$\quad}
\includegraphics[width=.25\linewidth]{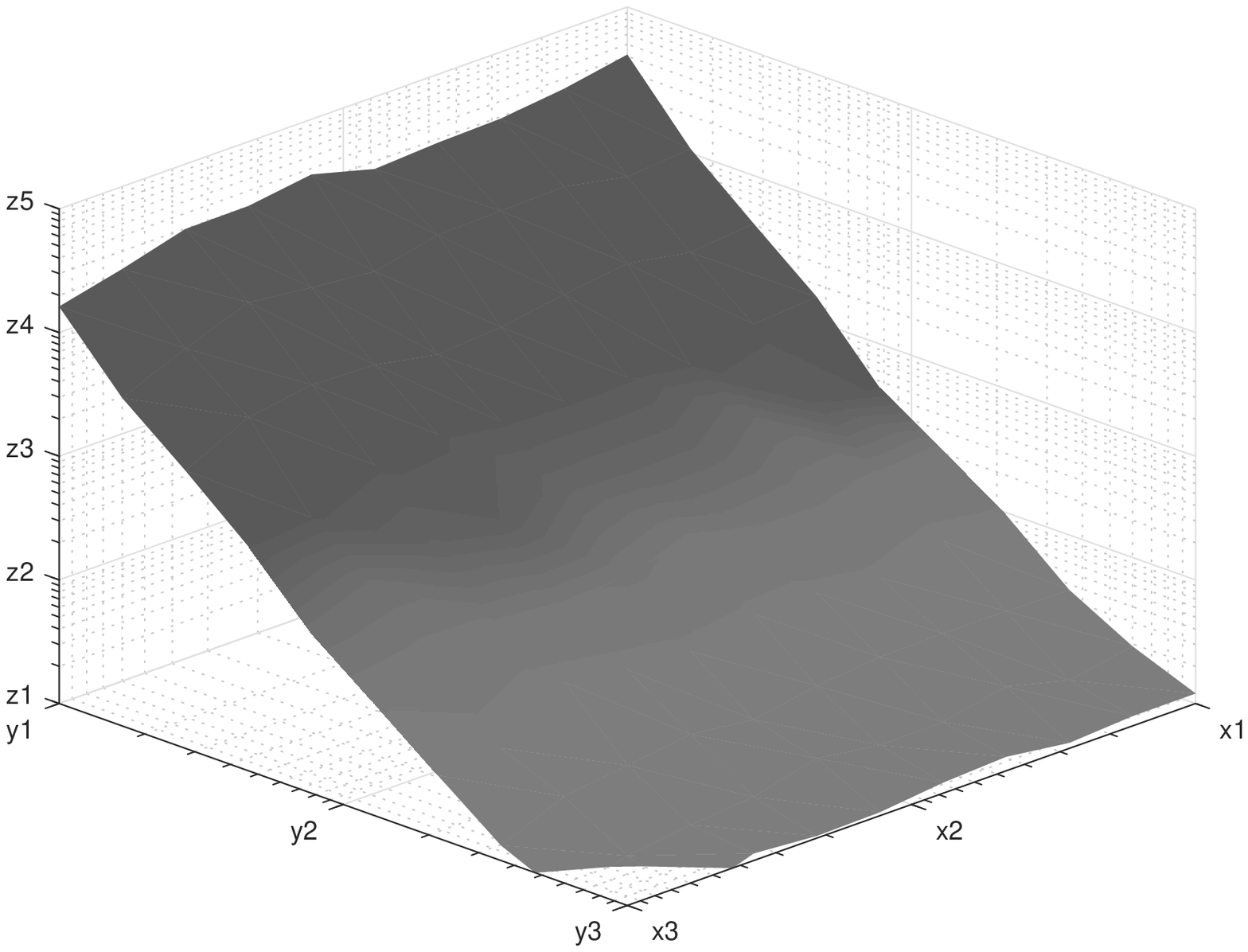}};
\node[] at (-1.1,-1.4) {\footnotesize $\epsilon_1$};
\node[] at (+1.2,-1.4) {\footnotesize $n_1$};
\node[rotate=90] at (-2.2,0) {\footnotesize $\mathbb{E}\{\psi(\bar{\theta}[T])\}$};
\end{tikzpicture}
\end{tabular}
\vspace{-.1in}
\caption{
\label{fig:a5}  
Relative fitness of the stochastic gradient method in Algorithm~\ref{alg:1} for learning lending interest rates after $T=100$ iterations versus the size of the dataset and the  privacy budget of the first data owner for four distinct scenarios of collaboration.
 }
\end{figure*}

\begin{figure*}
\centering
\begin{tabular}{ccc}
\hspace{-.25in}
$\epsilon_1=\epsilon_2=\epsilon_3=0.1$
\hspace{-.3in}
&
\hspace{-.2in}
$\epsilon_1=\epsilon_2=\epsilon_3=1.0$
\hspace{-.3in}
&
\hspace{-.2in}
$\epsilon_1=\epsilon_2=\epsilon_3=10$
\hspace{-.2in}
\\[-.5em]
\hspace{-.25in}
\begin{tikzpicture}
\node[] at (0,0) {
\psfrag{x1}[cc][][0.6][0]{\quad\quad$10^0$}
\psfrag{x2}[cc][][0.6][0]{$10^{1}$}
\psfrag{x3}[cc][][0.6][0]{$10^{2}$}
\psfrag{yyy1}[cl][bl][0.6][45]{\hspace{-.07in}$10^{-2}$}
\psfrag{yyy2}[cc][][0.6][45]{$10^{-1}$}
\psfrag{yyy3}[cc][][0.6][45]{$10^{0}$}
\psfrag{yyy4}[cc][][0.6][45]{$10^{1}$}
\psfrag{yyy5}[cc][][0.6][45]{$10^{2}$}
\psfrag{yyy6}[cc][][0.6][45]{$10^{3}$}
\includegraphics[width=.35\linewidth]{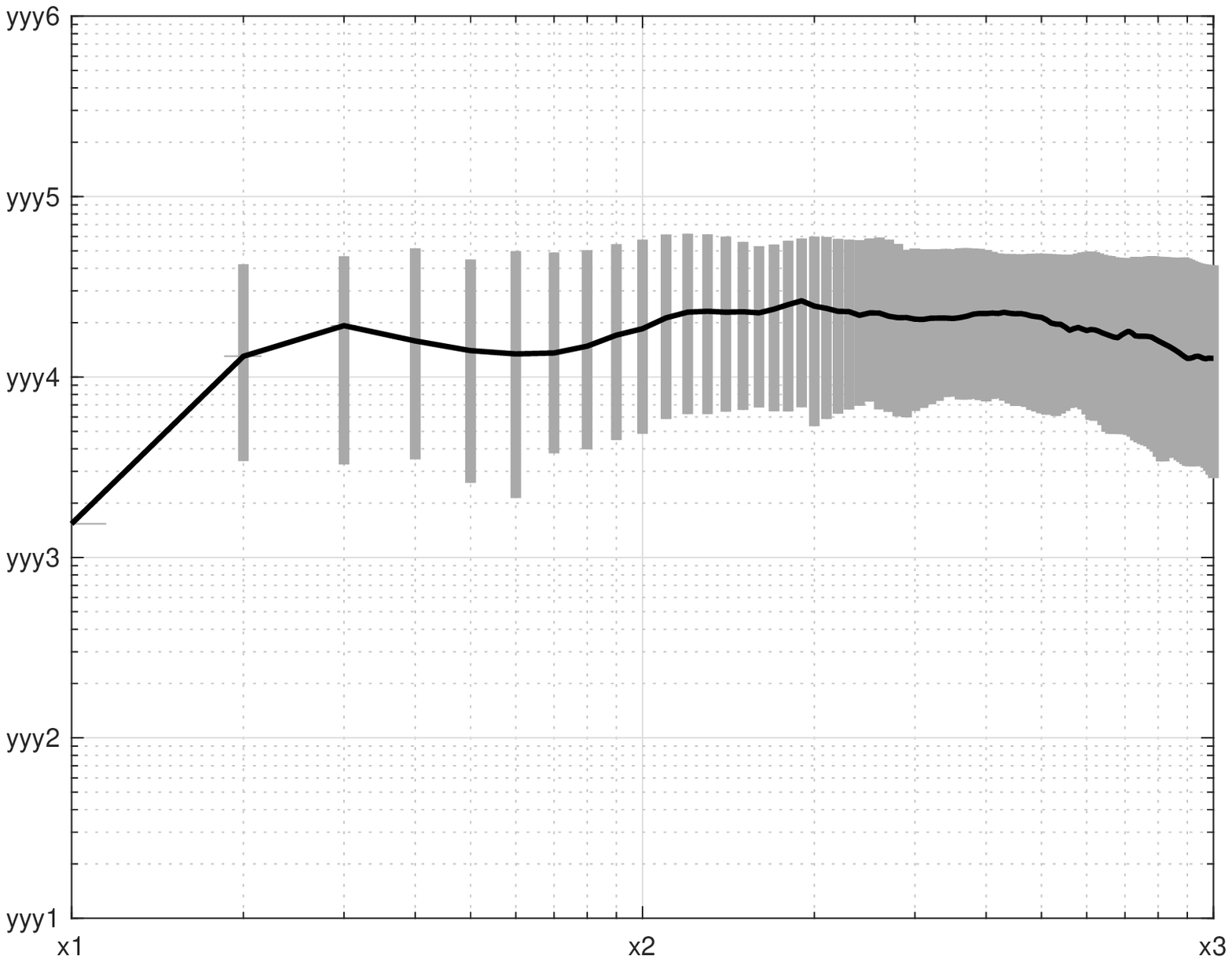}};
\node[] at (0,-2.3) {\footnotesize $k$};
\node[rotate=90] at (-2.9,0) {\footnotesize $\psi(\bar{\theta}[k])$};
\end{tikzpicture}
\hspace{-.3in}
&
\hspace{-.2in}
\begin{tikzpicture}
\node[] at (0,0) {
\psfrag{x1}[cc][][0.6][0]{\quad\quad$10^0$}
\psfrag{x2}[cc][][0.6][0]{$10^{1}$}
\psfrag{x3}[cc][][0.6][0]{$10^{2}$}
\psfrag{yyy1}[cl][bl][0.6][45]{\hspace{-.07in}$10^{-2}$}
\psfrag{yyy2}[cc][][0.6][45]{$10^{-1}$}
\psfrag{yyy3}[cc][][0.6][45]{$10^{0}$}
\psfrag{yyy4}[cc][][0.6][45]{$10^{1}$}
\psfrag{yyy5}[cc][][0.6][45]{$10^{2}$}
\psfrag{yyy6}[cc][][0.6][45]{$10^{3}$}
\includegraphics[width=.35\linewidth]{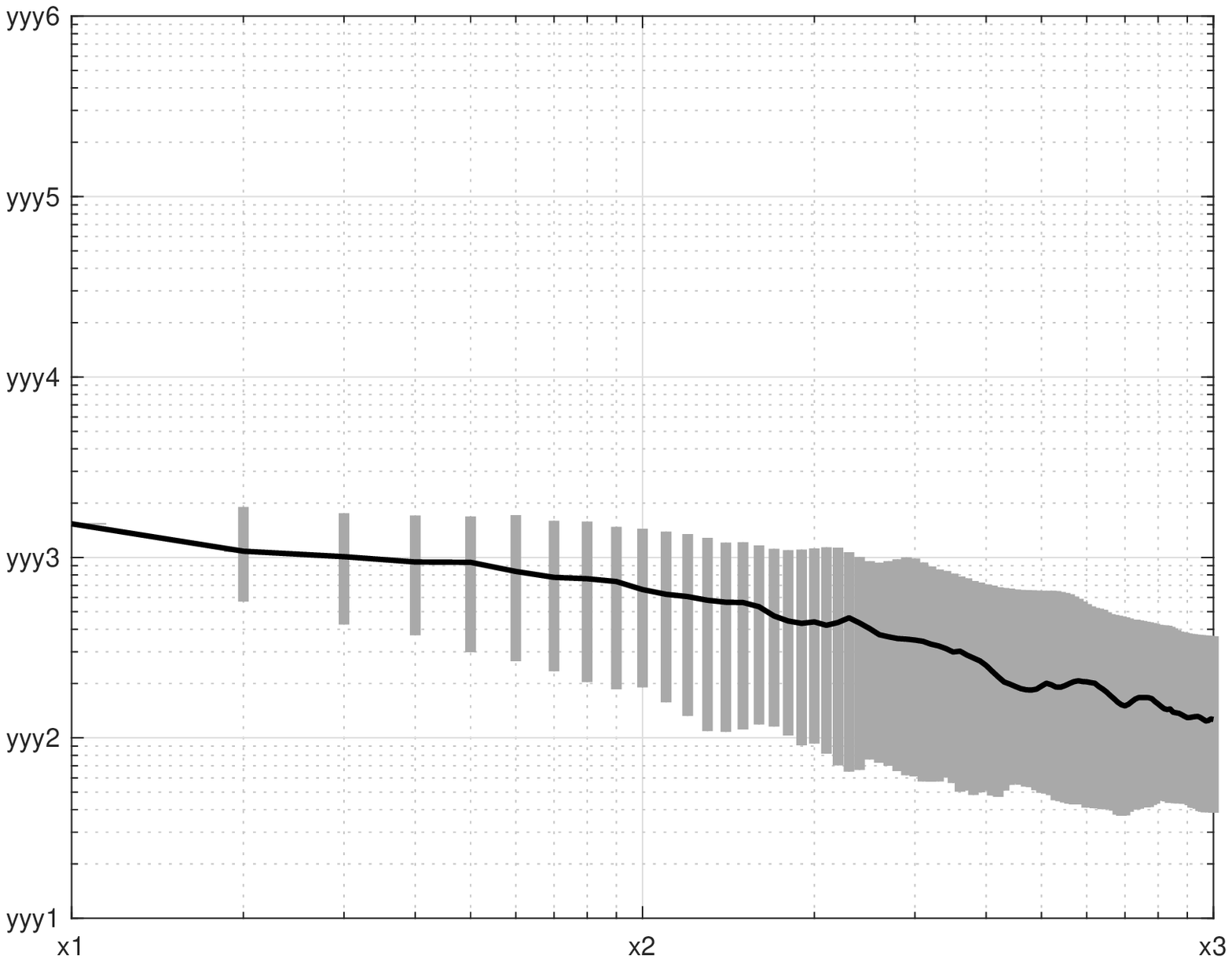}};
\node[] at (0,-2.3) {\footnotesize $k$};
\node[rotate=90] at (-2.9,0) {\footnotesize $\psi(\bar{\theta}[k])$};
\end{tikzpicture}
\hspace{-.3in}
&
\hspace{-.2in}
\begin{tikzpicture}
\node[] at (0,0) {
\psfrag{x1}[cc][][0.6][0]{\quad\quad$10^0$}
\psfrag{x2}[cc][][0.6][0]{$10^{1}$}
\psfrag{x3}[cc][][0.6][0]{$10^{2}$}
\psfrag{yyy1}[cl][bl][0.6][45]{\hspace{-.07in}$10^{-2}$}
\psfrag{yyy2}[cc][][0.6][45]{$10^{-1}$}
\psfrag{yyy3}[cc][][0.6][45]{$10^{0}$}
\psfrag{yyy4}[cc][][0.6][45]{$10^{1}$}
\psfrag{yyy5}[cc][][0.6][45]{$10^{2}$}
\psfrag{yyy6}[cc][][0.6][45]{$10^{3}$}
\includegraphics[width=.35\linewidth]{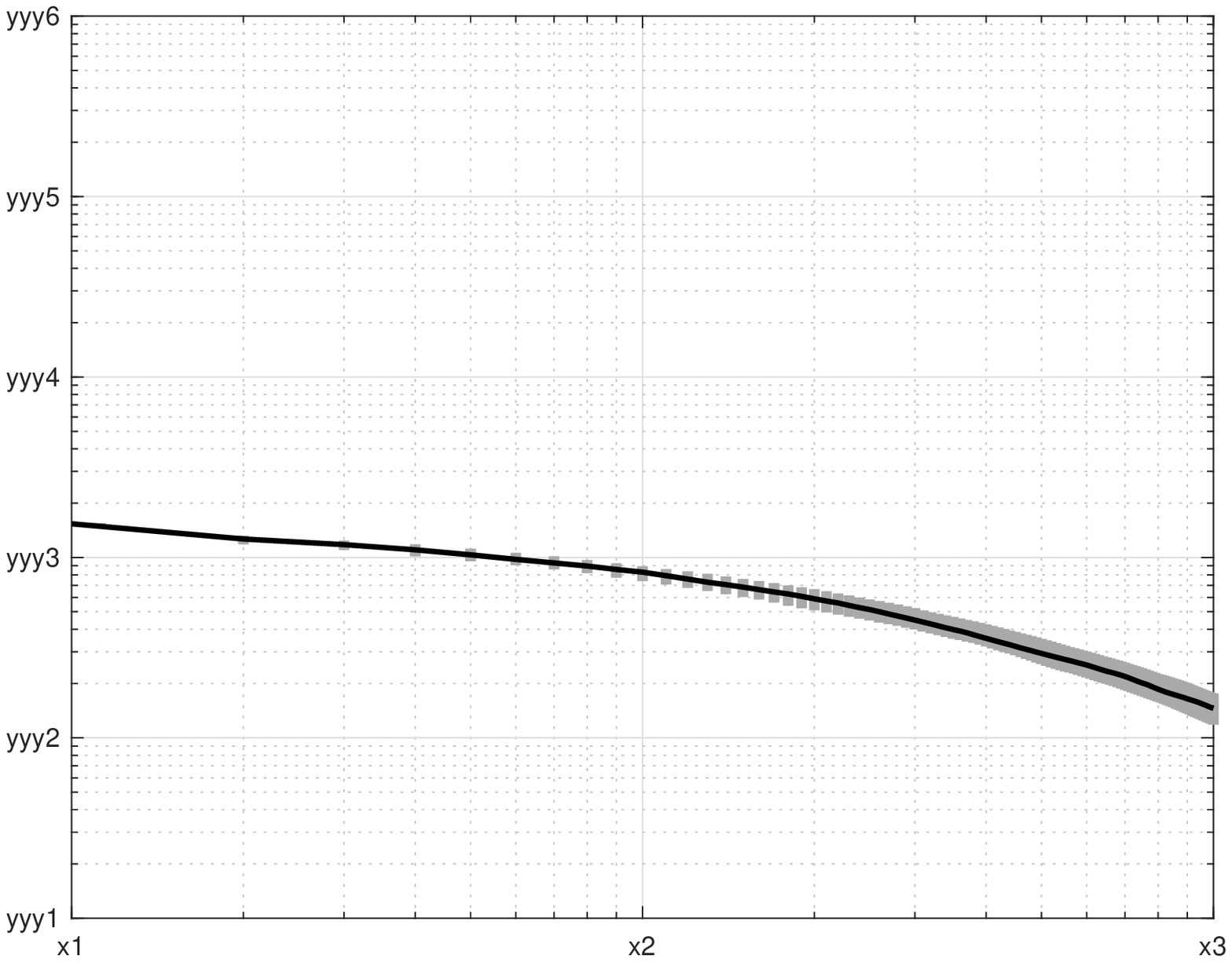}};
\node[] at (0,-2.3) {\footnotesize $k$};
\node[rotate=90] at (-2.9,0) {\footnotesize $\psi(\bar{\theta}[k])$};
\end{tikzpicture}
\end{tabular}
\vspace{-.2in}
\caption{
\label{fig:b0}  
Statistics of relative fitness of the stochastic gradient method in Algorithm~\ref{alg:1} for fraud detection versus the iteration number for $T=100$ with various choices of  privacy budgets. The boxes, i.e., the vertical lines at each iterations, illustrate the range of 25\% to 75\% percentiles for extracted from a hundred runs of the algorithm and the black lines show the median relative fitness.  }
\end{figure*}

\begin{figure}[t]
\centering
\begin{tikzpicture}
\node[] at (0,0) {
\psfrag{x1}[cc][][0.7][0]{$10^{3}$}
\psfrag{x2}[cc][][0.7][0]{$10^{4}$}
\psfrag{x3}[cc][][0.7][0]{$10^{5}$}
\psfrag{y1}[cc][][0.7][0]{\raisebox{-20pt}{$10^{-1}$}}
\psfrag{y2}[cc][][0.7][0]{$10^{0}$\quad}
\psfrag{y3}[cc][][0.7][0]{$10^{1}$\quad}
\psfrag{z1}[cc][][0.7][0]{$10^{-2}$\quad}
\psfrag{z2}[cc][][0.7][0]{$10^{0}$\quad}
\psfrag{z3}[cc][][0.7][0]{$10^{2}$\quad}
\psfrag{z4}[cc][][0.7][0]{$10^{4}$\quad}
\psfrag{z5}[cc][][0.7][0]{$10^{6}$\quad}
\psfrag{z6}[cc][][0.7][0]{$10^{8}$\quad}
\includegraphics[width=.75\columnwidth]{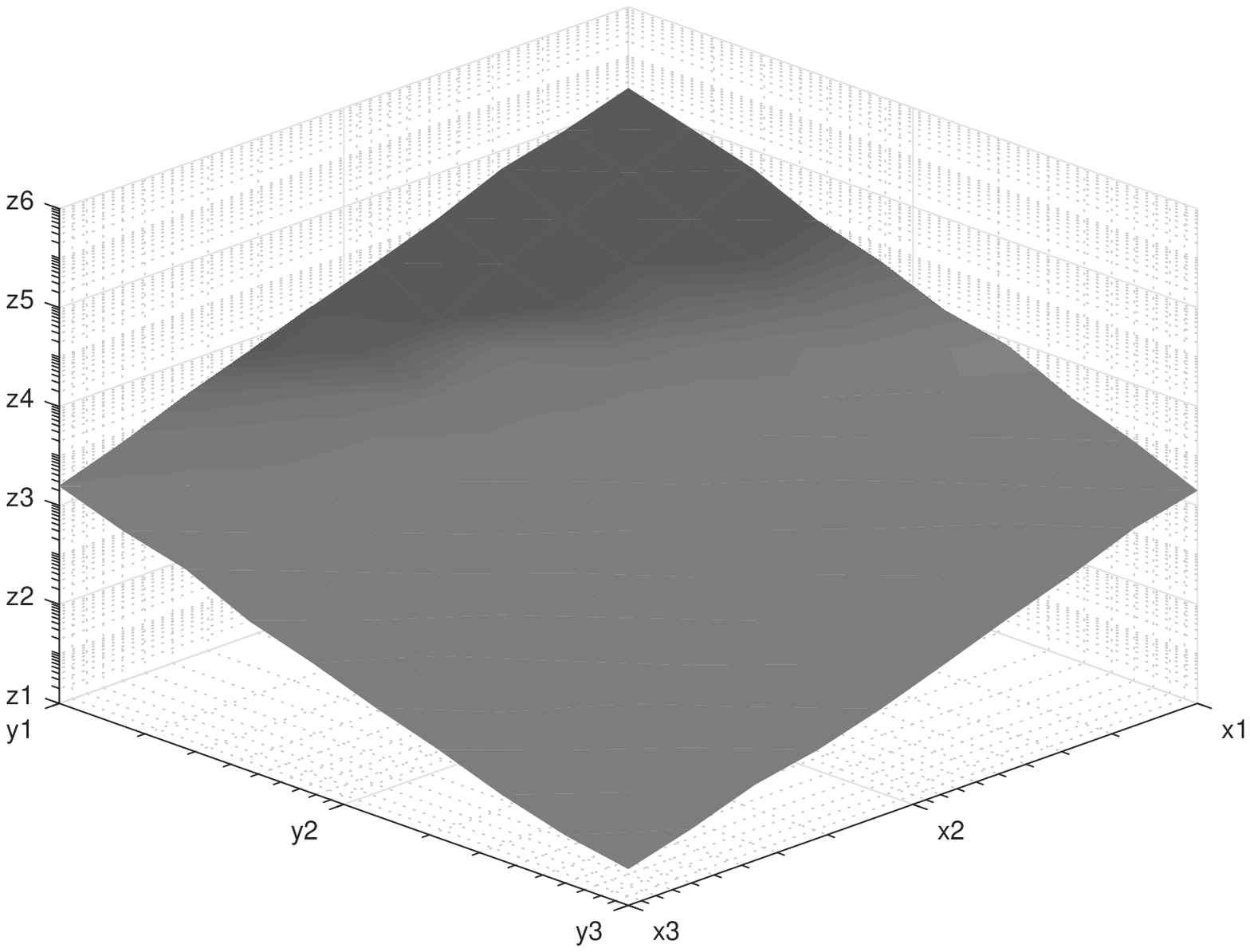}};
\node[rotate=20] at (+1.7,-1.9) {\footnotesize $n_1=n_2=n_3$};
\node[rotate=-18] at (-1.6,-1.9) {\footnotesize $\epsilon_1=\epsilon_2=\epsilon_3$};
\node[rotate=90] at (-3.2,0) {\footnotesize $\mathbb{E}\{\psi(\bar{\theta}[T])\}$};
\end{tikzpicture}
\vspace{-.1in}
\caption{\label{fig:b1}  Relative fitness of the stochastic gradient method in Algorithm~\ref{alg:1} for fraud detection after $T=100$ iterations versus the size of the datasets and the privacy budgets. }
\end{figure}

\subsubsection{Dataset Description} The dataset contains information regarding nearly 890,000 loans made on a peer-to-peer lending platform, called the Lending Club, which is available on Kaggle~\cite{kaggle1}. The inputs contain loan attributes, such as total loan size, and borrower information, such as number of credit lines, state of residence, and age. The outputs are the interest rates of the loans per annum. We encode categorical attributes, such as state of residence and loan grade assigned by the Loan Club, with integer numbers. We also remove unique identifier attributes, such as id and member id, as well as irrelevant attributes, such as the uniform resource locator (URL) for the Loan Club page with listing data. Finally, we perform feature selection using the Principal Component Analysis (PCA) to select the top ten important features. This step massively improves the numerical stability of the algorithm. For the PCA, we only use the last ten-thousand entries of the dataset to ensure that the feature selection does not violate the distributed nature of the algorithm. Note that, if we were to use the entire dataset for the PCA, the data should have been available at one location for processing which is contradictory to the assumptions of the paper regarding the distributed nature of the dataset and the privacy requirements of the data owners. After performing the PCA, the eigenvectors corresponding to the most important features are communicated to the distributed datasets. The first $n_1$ entries of the Lending Club are assumed to be the private data of the first data owner. The entries between $n_1+1$ to $n_1+n_2$ belong to the second data owner and the entries between $n_1+n_2+1$ to $n_1+n_2+n_3$ are with the third data owner. \farhad{Note that, by construct, these distributed datasets are non-overlapping, i.e., they do not share identical records.} We may use any other approach for splitting the Lending Club dataset among the private data owners as long as the distributed datasets are not overlapping. 
The data owners then balance their datasets using the-said eigenvectors. \farhad{The balancing refers \farhad{to a} transformation of the dataset using the eigenvectors to extract the most important independent features.} The eigenvectors, here, serve as a common dictionary between the data owners for communication and training.

\subsubsection{Experiment Setup} The experiments demonstrate the outcome of collaborations among $N=3$ financial institutes, e.g., banks, for training a ML model to automate the process of assigning interest rates to loan applications based on the attributes of the borrower and the loan. Each institute has access to a private dataset of $n_i$ historical loan applications and approved interest rates. The value of $\epsilon_i$ for each institute essentially determines eagerness for collaboration and openness to sharing private proprietary datasets. For a linear regression model, we consider a linear ML model relating the inputs and the outputs as in $y=\mathfrak{M}(x;\theta):=\theta^\top x$ with $\theta\in\mathbb{R}^{p_\theta}$ denoting the parameters of the ML model. We train the model by solving the optimization problem~\eqref{eqn:ML} with $g_2(\mathfrak{M}(x;\theta),y)= \|y-\mathfrak{M}(x;\theta)\|_2^2$, and $g_1(\theta)=0$.

\subsubsection{Results} First, we demonstrate the behaviour (e.g., convergence) of the iterates of the stochastic gradient descent procedure in Algorithm~\ref{alg:1}. Consider the case where $n_1=n_2=n_3=250,000$. Figure~\ref{fig:a0} shows the statistics of the relative fitness of the stochastic gradient method in Algorithm~\ref{alg:1} for a ML model determining lending interest rates, $\psi(\bar{\theta}[k])$, versus the iteration number $k$ for $T=100$ for three choices of privacy budgets $\epsilon_1=\epsilon_2=\epsilon_3$ \farhad{to illustrate the convergence of the learning algorithm as established in Theorem~\ref{tho:2}}. The algorithm is stochastic because the data owners provide differentially-private responses to the gradient queries, obfuscated with Laplace noise in Theorem~\ref{tho:1}. Thus each run of the algorithm follows a different relative fitness trend. The boxes, i.e., the vertical lines at each iterations, illustrate the range of 25\% to 75\% percentiles of the relative fitness extracted from one-hundred runs of the algorithm. The black lines show the median relative fitness versus the iteration number. The effect of the privacy budgets  on the quality of the iterates at the end of $T$ iterations is evident, as expected from Theorem~\ref{tho:2}. As $\epsilon_1=\epsilon_2=\epsilon_3$ increases, i.e., the data owners become more willing to share data, the performance of the trained ML model improves. \farhad{For instance, by increasing the privacy budget from $\epsilon_1=\epsilon_2=\epsilon_3=1$ to $\epsilon_1=\epsilon_2=\epsilon_3=10$, the relative fitness of the algorithm improves (i.e., decrease\farhad{s}), \farhad{on} average, by approximately 100-fold.}

After establishing the desired transient behaviour of the algorithm, we can investigate the effect of the size of the datasets and the privacy budgets on the performance of the trained ML model, i.e., the ML model after all the iterations have passed. Figure~\ref{fig:a1} shows the expectation (i.e., the statistical mean) of the relative fitness of the stochastic gradient method in Algorithm~\ref{alg:1} for the trained ML model after $T=100$ iterations versus the size of the datasets $n_1=n_2=n_3$ and the privacy budgets $\epsilon_1=\epsilon_2=\epsilon_3$. As predicted by Theorem~\ref{tho:2}, the fitness improves as the size of the datasets $n_1=n_2=n_3$ and/or the privacy budgets  $\epsilon_1=\epsilon_2=\epsilon_3$ increase. To quantify the tightness of the upper-bound in Theorem~\ref{tho:2} for Algorithm~\ref{alg:1}, we isolate the effects of the size of the datasets and the privacy budgets  on the relative fitness. Figure~\ref{fig:a3} illustrates the expectation of the relative fitness of the stochastic gradient method in Algorithm~\ref{alg:1} after $T=100$ iterations versus the privacy budgets  $\epsilon_1=\epsilon_2=\epsilon_3$. In this figure, the markers (i.e., {\small$\blacksquare$}, $\blacklozenge$, and \tikz\draw[black,fill=black] (0,0) circle (.6ex);) are from the experiments and the solid lines are fitted to the experimental data. 
We can see that the slope of the linear lines in the log-log scale in Figure~\ref{fig:a3} is $-2$. This shows that $\psi(\bar{\theta}[k])\propto \epsilon_i^{-2}$. Hence, our bound in Theorem~\ref{tho:2} is not tight as it states that  $\psi(\bar{\theta}[k])$ is upper bounded by a function of the form $\epsilon_i^{-1}$. This is because Theorem~\ref{tho:2} does not use the fact that the cost function for the regression is strongly convex and has Lipschitz gradients. These assumptions are utilized in Theorem~\ref{tho:utility_strongconvex} and the bounds in this theorem are in fact tight, as Theorem~\ref{tho:utility_strongconvex} states that  $\psi(\bar{\theta}[k])$ is upper bounded by a function of the form $\epsilon_i^{-2}$. Figure~\ref{fig:a4} shows the expectation of the relative fitness of the stochastic gradient method in Algorithm~\ref{alg:1} after $T=100$ iterations versus the size of the datasets $n_1=n_2=n_3$. Similarly, the slop of the linear lines in the log-log scale in Figure~\ref{fig:a4} is $-2$ pointing to that $\psi(\bar{\theta}[k])\propto n_i^{-2}$. This is again a perfect match for our theoretical bound in Theorem~\ref{tho:utility_strongconvex} (because $n=n_1+n_2+n_3=3n_i$). 

\begin{figure}[t]
\centering
\begin{tikzpicture}
\node[] at (0,0) {
\psfrag{x1}[cc][][0.6][0]{$10^{-1}$}
\psfrag{x2}[cc][][0.6][0]{$10^{0}$}
\psfrag{x3}[cc][][0.6][0]{$10^{1}$}
\psfrag{y1}[cc][][0.6][0]{$10^{-1}$\quad}
\psfrag{y2}[cc][][0.6][0]{$10^{1}$\quad}
\psfrag{y3}[cc][][0.6][0]{$10^{3}$\quad}
\psfrag{y4}[cc][][0.6][0]{$10^{5}$\quad}
\psfrag{y5}[cc][][0.6][0]{$10^{7}$\quad}
\psfrag{y6}[cc][][0.6][0]{$10^{9}$\quad}
\psfrag{aaaaaaaaaa1}[l][t][0.6][0]{\small \hspace{-.26in}\raisebox{-10pt}{$n=10^3$}}
\psfrag{aaaaaaaaaa2}[l][t][0.6][0]{\small \hspace{-.26in}\raisebox{-10pt}{$n=10^4$}}
\psfrag{aaaaaaaaaa3}[l][t][0.6][0]{\small \hspace{-.26in}\raisebox{-10pt}{$n=10^5$}}
\includegraphics[width=.75\columnwidth]{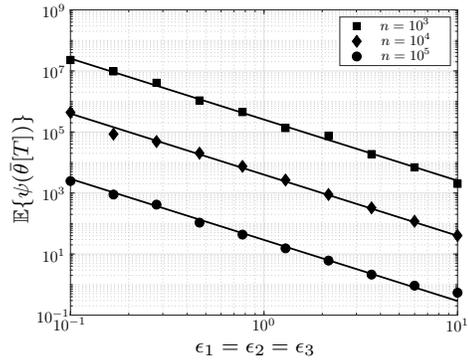}};
\node[] at (0,-2.4) {\footnotesize $\epsilon_1=\epsilon_2=\epsilon_3$};
\node[rotate=90] at (-3.1,0) {\footnotesize $\mathbb{E}\{\psi(\bar{\theta}[T])\}$};
\end{tikzpicture}
\vspace{-.1in}
\caption{\label{fig:b3} Relative fitness of the stochastic gradient method in Algorithm~\ref{alg:1} for fraud detection after $T=100$ iterations versus the privacy budgets. The solid line illustrate the bound in Theorem~\ref{tho:utility_strongconvex}. }
\end{figure}

\begin{figure}[t]
\centering
\begin{tikzpicture}
\node[] at (0,0) {
\psfrag{x1}[cc][][0.6][0]{$10^{2}$}
\psfrag{x2}[cc][][0.6][0]{$10^{3}$}
\psfrag{x3}[cc][][0.6][0]{$10^{4}$}
\psfrag{y1}[cc][][0.6][0]{$10^{-1}$\quad}
\psfrag{y2}[cc][][0.6][0]{$10^{1}$\quad}
\psfrag{y3}[cc][][0.6][0]{$10^{3}$\quad}
\psfrag{y4}[cc][][0.6][0]{$10^{5}$\quad}
\psfrag{y5}[cc][][0.6][0]{$10^{7}$\quad}
\psfrag{y6}[cc][][0.6][0]{$10^{9}$\quad}
\psfrag{aaaaaaa1}[l][t][0.6][0]{\small \hspace{-.24in}\raisebox{-10pt}{$\epsilon=0.1$}}
\psfrag{aaaaaaa2}[l][t][0.6][0]{\small \hspace{-.24in}\raisebox{-10pt}{$\epsilon=1$}}
\psfrag{aaaaaaa3}[l][t][0.6][0]{\small \hspace{-.24in}\raisebox{-10pt}{$\epsilon=10$}}
\includegraphics[width=.75\columnwidth]{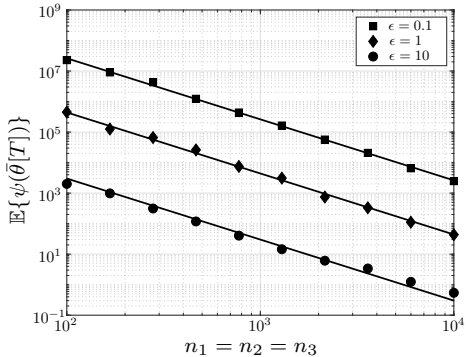}};
\node[] at (0,-2.4) {\footnotesize $n_1=n_2=n_3$};
\node[rotate=90] at (-3.1,0) {\footnotesize $\mathbb{E}\{\psi(\bar{\theta}[T])\}$};
\end{tikzpicture}
\vspace{-.1in}
\caption{\label{fig:b4} Relative fitness of the stochastic gradient method in Algorithm~\ref{alg:1} for fraud detection after $T=100$ iterations versus the size of the datasets. The solid line illustrate the bound in Theorem~\ref{tho:utility_strongconvex}.  }
\end{figure}

Finally, we consider a few scenarios of collaboration for the data owners. Specifically, we evaluate the performance of the learning algorithm for four distinct scenarios in which the second and the third data owners have: (\textit{i}) small datasets and small  privacy budgets (i.e., reluctant to share due to privacy concerns); (\textit{ii}) small datasets and large  privacy budgets  (i.e., eager to share); (\textit{iii}) large datasets and small  privacy budgets; (\textit{iv}) large datasets and large  privacy budgets. For each case, we vary the  privacy budget  and the size of the dataset of the first data owner. This allows us to investigate the potential benefit to data owners \farhad{from collaboration} in various scenarios. Figure~\ref{fig:a5} illustrates the expectation of the relative fitness of the stochastic gradient method in Algorithm~\ref{alg:1}, after $T=100$ iterations, versus the size of the dataset $n_1$ and the  privacy budget  $\epsilon_1$ for four distinct scenarios of collaboration. The first scenario in Figure~\ref{fig:a5} (the left most plot) shows that there is no point in collaboration with small data owners, even if the size of the dataset of the first data owner is large and it is eager to share its data\farhad{; the relative fitness (capturing the distance between private ML model and the non-private model) is very large, it does not change significantly with $\epsilon_1$, and it still remains large for relative large datasets $n_1=10^5$.} We could foresee this from the bound in Theorem~\ref{tho:2} without running Algorithm~\ref{alg:1}. This bound shows that $\psi(\bar{\theta}[k])\propto 1/(2000+n_1)\sqrt{200+1/\epsilon_1^2}$; hence, no matter how large $\epsilon_1$ gets (even if $\epsilon_1=\infty$), the error's coefficient remains large due to small  privacy budgets  of the other two data owners and $n_1$ must become considerably large to compensate for it. In the second scenario (the second left most plot in Figure~\ref{fig:a5}), the effect of $\epsilon_1$ and $n_1$ are more pronounced. This is because, although the other two data owners are small, they do not hinder the learning process by adding large amounts of privacy-preserving noise because of their conservatively small  privacy budgets. The third scenario is similar to the first one, albeit with better relative fitness as conservative data owners are relatively larger. The best scenario for collaboration, unsurprisingly, is the fourth scenario in which phenomenal performances can be achieved even without much consideration towards the size of the first dataset or its  privacy budget  as the other two datasets are large and eager to collaborate for learning.

\begin{figure*}[t]
\centering
\begin{tabular}{cccc}
\hspace{-.2in}
\footnotesize 
Scenario 1:
\hspace{-.2in}
&
\hspace{-.2in}
\footnotesize 
Scenario 2:
\hspace{-.2in}
&
\hspace{-.2in}
\footnotesize 
Scenario 3:
\hspace{-.2in}
&
\hspace{-.2in}
\footnotesize 
Scenario 4:\\[-.5em]
\hspace{-.2in}
\footnotesize 
$n_2=n_3=10^2$
\hspace{-.2in}
&
\hspace{-.2in}
\footnotesize 
$n_2=n_3=10^2$
\hspace{-.2in}
&
\hspace{-.2in}
\footnotesize 
$n_2=n_3=10^4$
\hspace{-.2in}
&
\hspace{-.2in}
\footnotesize 
$n_2=n_3=10^4$\\[-.4em]
\hspace{-.2in}
\footnotesize 
(small dataset)
\hspace{-.2in}
&
\hspace{-.2in}
\footnotesize 
(small dataset)
\hspace{-.2in}
&
\hspace{-.2in}
\footnotesize 
(large dataset)
\hspace{-.2in}
&
\hspace{-.2in}
\footnotesize 
(large dataset)\\[-.5em]
\hspace{-.2in}
\footnotesize 
$\epsilon_2=\epsilon_3=0.1$ 
\hspace{-.2in}
&
\hspace{-.2in}
\footnotesize 
$\epsilon_2=\epsilon_3=10$ 
\hspace{-.2in}
&
\hspace{-.2in}
\footnotesize 
$\epsilon_2=\epsilon_3=0.1$ 
\hspace{-.2in}
&
\hspace{-.2in}
\footnotesize 
$\epsilon_2=\epsilon_3=10$ \\[-.5em]
\hspace{-.2in}
\footnotesize 
(small privacy budget)
\hspace{-.2in}
&
\hspace{-.2in}
\footnotesize 
(large privacy budget)
\hspace{-.2in}
&
\hspace{-.2in}
\footnotesize 
(small privacy  budget)
\hspace{-.2in}
&
\hspace{-.2in}
\footnotesize 
(large privacy budget) \\[-.5em]
\hspace{-.2in}
\begin{tikzpicture}
\node[] at (0,0) {
\psfrag{x1}[cc][][0.6][0]{$10^3$}
\psfrag{x2}[cc][][0.6][0]{$10^4$}
\psfrag{x3}[cc][][0.6][0]{\quad$10^5$}
\psfrag{y1}[cc][bl][0.6][0]{\;$10^{-1}$}
\psfrag{y2}[cc][][0.6][0]{$10^{0}$}
\psfrag{y3}[cc][][0.6][0]{$10^{1}$\quad}
\psfrag{z1}[cc][t][0.6][0]{$10^{3}$\quad}
\psfrag{z2}[cc][][0.6][0]{$10^{4}$\quad}
\psfrag{z3}[cc][][0.6][0]{$10^{5}$\quad}
\psfrag{z4}[cc][][0.6][0]{$10^{6}$\quad}
\psfrag{z5}[cc][][0.6][0]{$10^{8}$\quad}
\includegraphics[width=.25\linewidth]{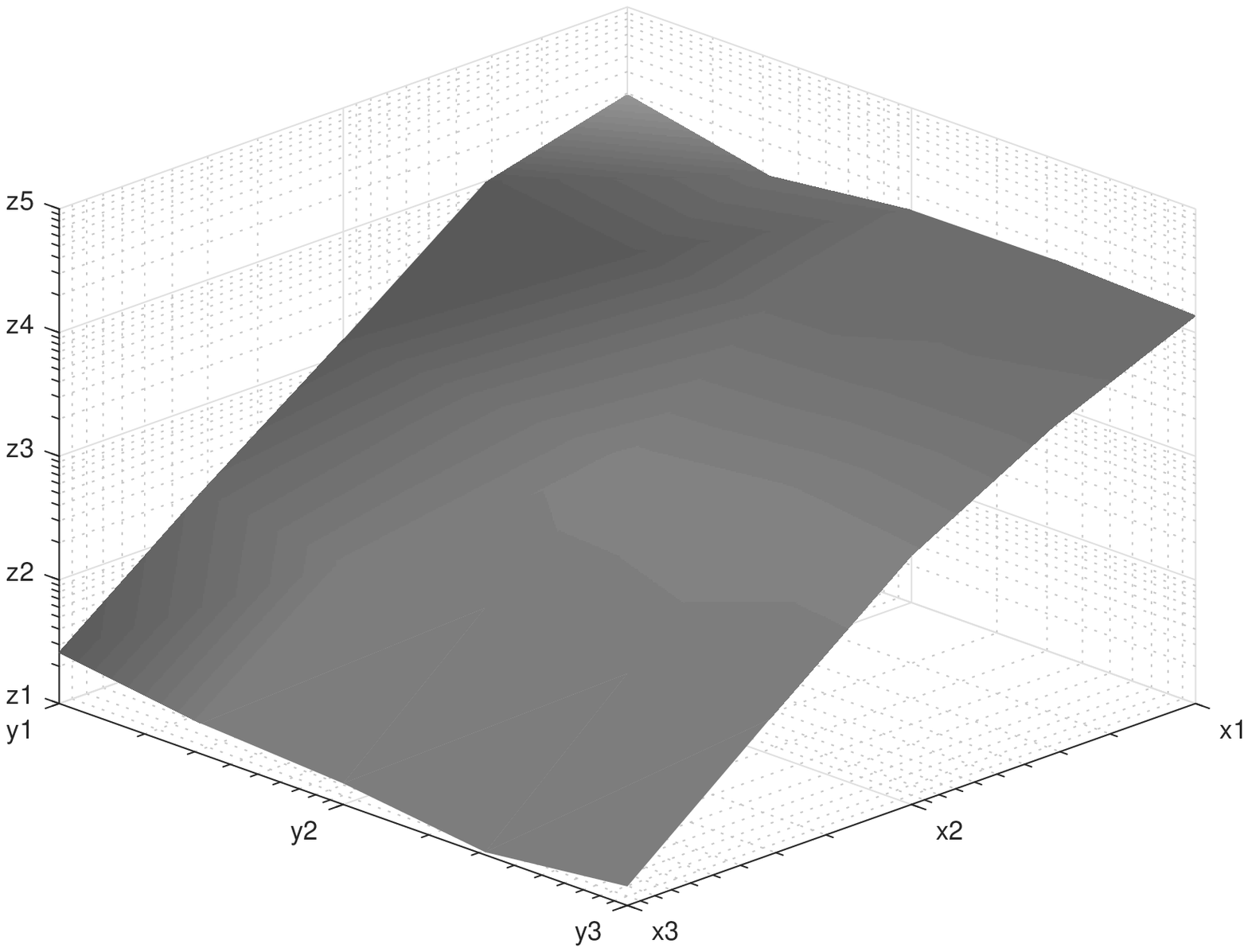}};
\node[] at (-1.1,-1.4) {\footnotesize $\epsilon_1$};
\node[] at (+1.2,-1.4) {\footnotesize $n_1$};
\node[rotate=90] at (-2.3,0) {\footnotesize $\mathbb{E}\{\psi(\bar{\theta}[T])\}$};
\end{tikzpicture}
\hspace{-.2in}
&
\hspace{-.2in}
\begin{tikzpicture}
\node[] at (0,0) {
\psfrag{x1}[cc][][0.6][0]{$10^3$}
\psfrag{x2}[cc][][0.6][0]{$10^4$}
\psfrag{x3}[cc][][0.6][0]{\quad$10^5$}
\psfrag{y1}[cc][bl][0.6][0]{\;$10^{-1}$}
\psfrag{y2}[cc][][0.6][0]{$10^{0}$}
\psfrag{y3}[cc][][0.6][0]{$10^{1}$\quad}
\psfrag{z1}[cc][t][0.6][0]{$10^{0}$\quad}
\psfrag{z2}[cc][][0.6][0]{$10^{2}$\quad}
\psfrag{z3}[cc][][0.6][0]{$10^{4}$\quad}
\psfrag{z4}[cc][][0.6][0]{$10^{6}$\quad}
\psfrag{z5}[cc][][0.6][0]{$10^{8}$\quad}
\includegraphics[width=.25\linewidth]{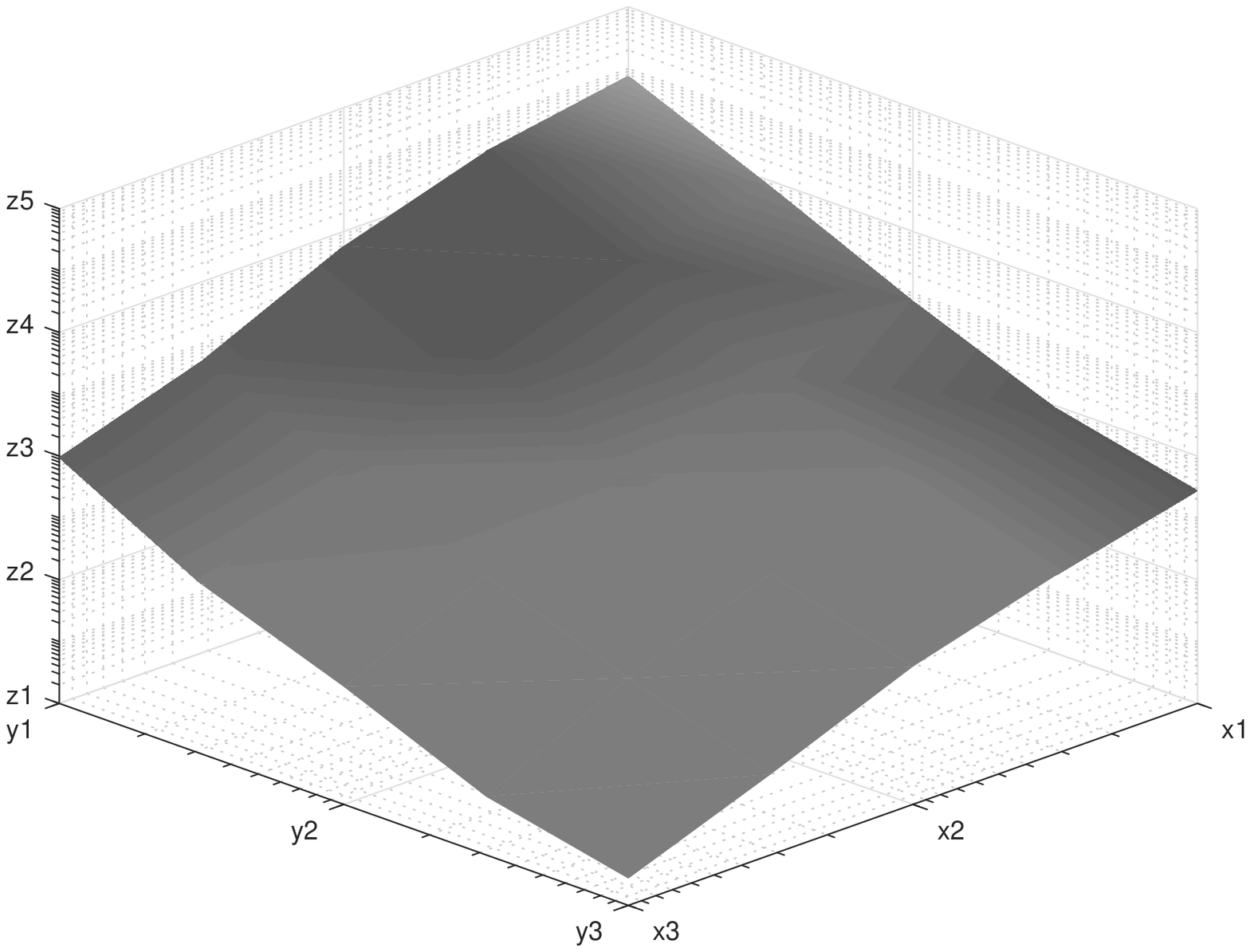}};
\node[] at (-1.1,-1.4) {\footnotesize $\epsilon_1$};
\node[] at (+1.2,-1.4) {\footnotesize $n_1$};
\node[rotate=90] at (-2.3,0) {\footnotesize $\mathbb{E}\{\psi(\bar{\theta}[T])\}$};
\end{tikzpicture}
\hspace{-.2in}
&
\hspace{-.2in}
\begin{tikzpicture}
\node[] at (0,0) {
\psfrag{x1}[cc][][0.6][0]{$10^3$}
\psfrag{x2}[cc][][0.6][0]{$10^4$}
\psfrag{x3}[cc][][0.6][0]{\quad$10^5$}
\psfrag{y1}[cc][bl][0.6][0]{\;$10^{-1}$}
\psfrag{y2}[cc][][0.6][0]{$10^{0}$}
\psfrag{y3}[cc][][0.6][0]{$10^{1}$\quad}
\psfrag{z1}[cc][t][0.6][0]{$10^{3}$\quad}
\psfrag{z2}[cc][][0.6][0]{$10^{4}$\quad}
\includegraphics[width=.25\linewidth]{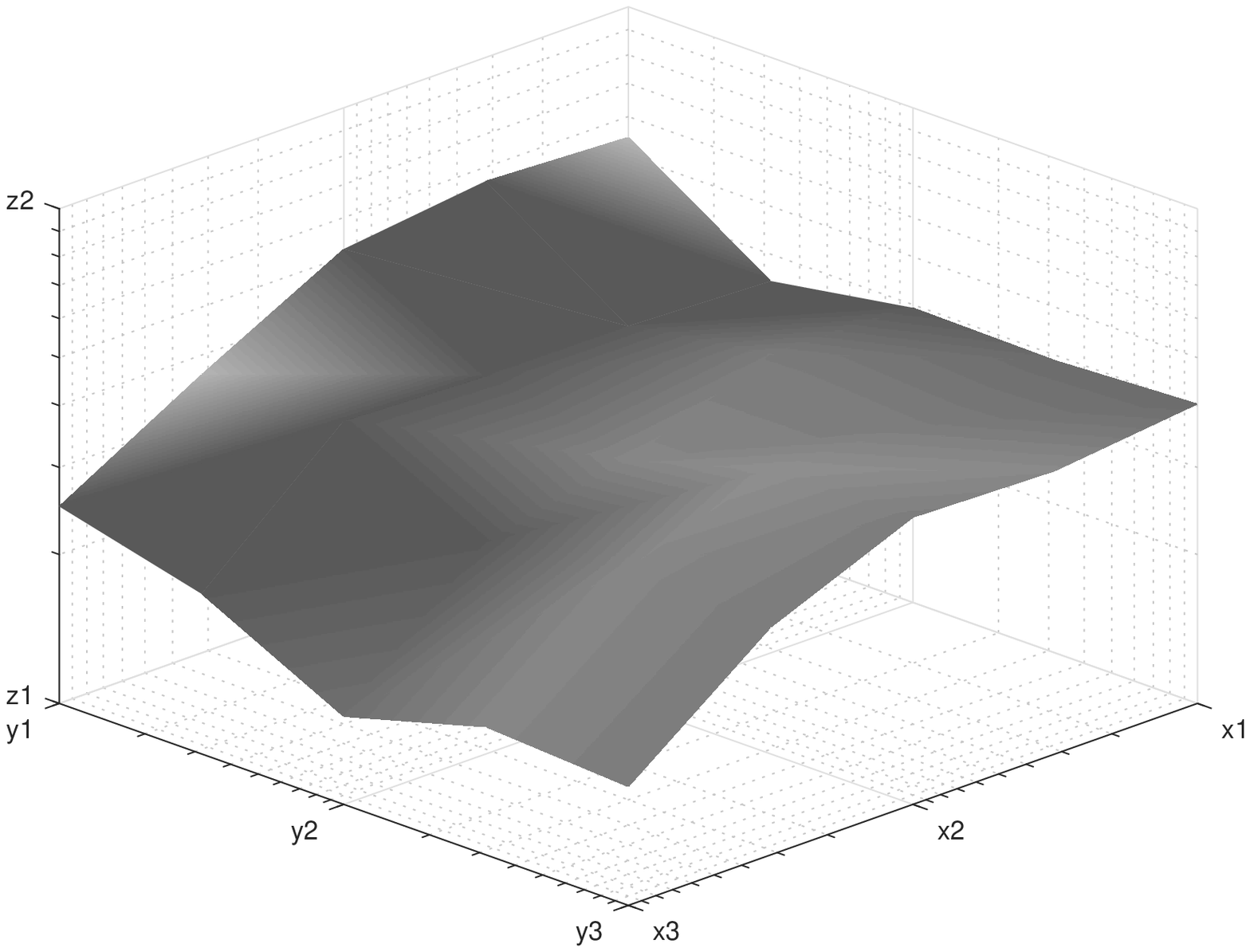}};
\node[] at (-1.1,-1.4) {\footnotesize $\epsilon_1$};
\node[] at (+1.2,-1.4) {\footnotesize $n_1$};
\node[rotate=90] at (-2.3,0) {\footnotesize $\mathbb{E}\{\psi(\bar{\theta}[T])\}$};
\end{tikzpicture}
\hspace{-.2in}
&
\hspace{-.2in}
\begin{tikzpicture}
\node[] at (0,0) {
\psfrag{x1}[cc][][0.6][0]{$10^3$}
\psfrag{x2}[cc][][0.6][0]{$10^4$}
\psfrag{x3}[cc][][0.6][0]{\quad$10^5$}
\psfrag{y1}[cc][bl][0.6][0]{\;$10^{-1}$}
\psfrag{y2}[cc][][0.6][0]{$10^{0}$}
\psfrag{y3}[cc][][0.6][0]{$10^{1}$\quad}
\psfrag{z1}[cc][t][0.6][0]{$10^{-1}$\quad}
\psfrag{z2}[cc][][0.6][0]{$10^{0}$\quad}
\psfrag{z3}[cc][][0.6][0]{$10^{1}$\quad}
\psfrag{z4}[cc][][0.6][0]{$10^{2}$\quad}
\psfrag{z5}[cc][][0.6][0]{$10^{3}$\quad}
\psfrag{z6}[cc][][0.6][0]{$10^{4}$\quad}
\includegraphics[width=.25\linewidth]{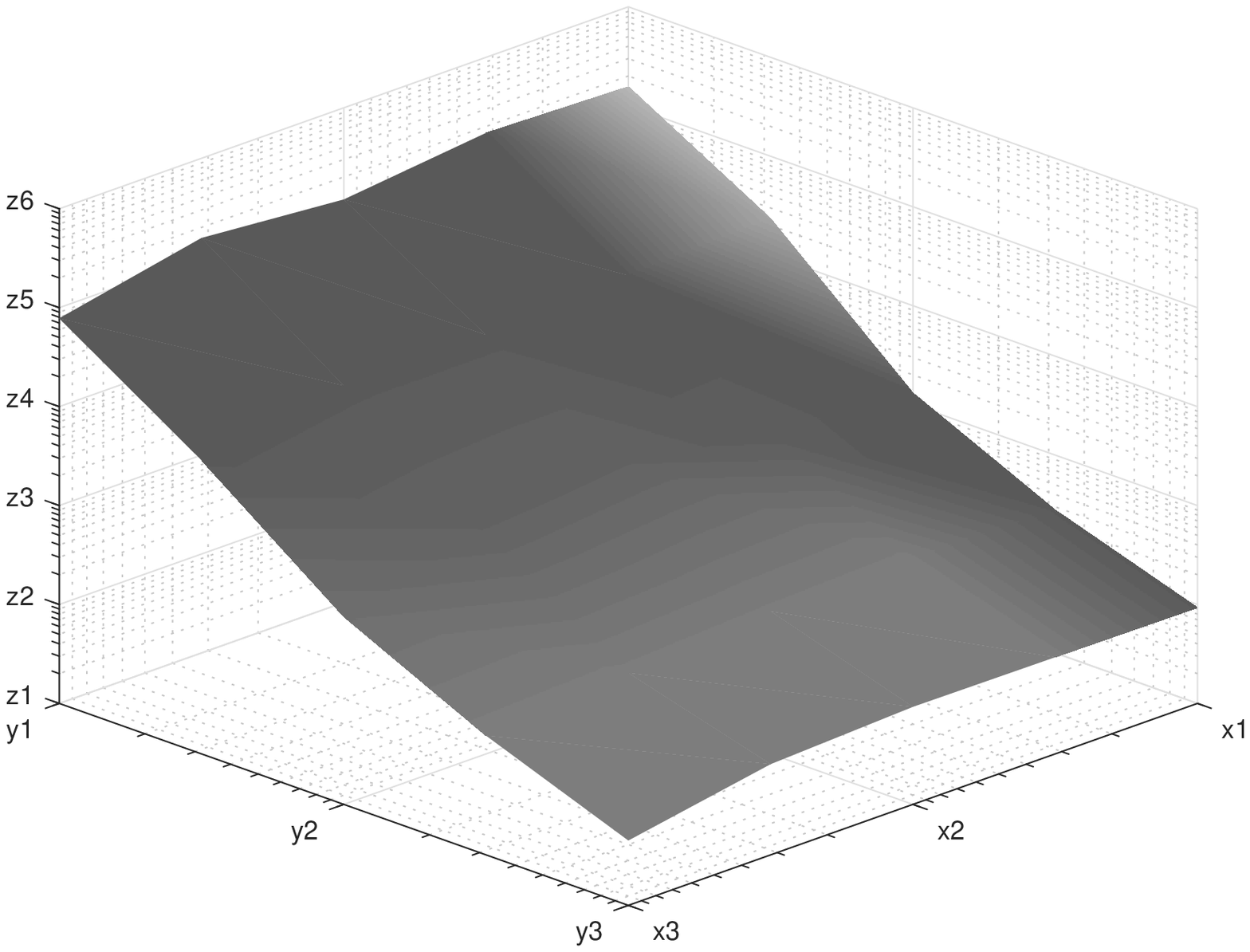}};
\node[] at (-1.1,-1.4) {\footnotesize $\epsilon_1$};
\node[] at (+1.2,-1.4) {\footnotesize $n_1$};
\node[rotate=90] at (-2.3,0) {\footnotesize $\mathbb{E}\{\psi(\bar{\theta}[T])\}$};
\end{tikzpicture}
\end{tabular}
\vspace{-.1in}
\caption{
\label{fig:b5}  
Relative fitness of the stochastic gradient method in Algorithm~\ref{alg:1} for a trained ML model determining lending interest rates after $T=100$ iterations versus the size of the dataset and the privacy budget of the first data owner for four distinct scenarios of collaboration.
 }
\end{figure*}

\subsection{Credit Card Fraud Detection}
In this subsection, we use a credit card dataset with a L-SVM classifier to further demonstrate the value of the methodology and to validate the theoretical results. 

\subsubsection{Dataset Description} The datasets contains transactions made by European credit card holders in September 2013 available on Kaggle~\cite{kaggle2}. The inputs are vectors extracted by PCA (to avoid confidentiality issues) as well as the amount of the transaction. The output is a class, determining if the transactions was deemed fraudulent or not. The dataset is highly unbalanced, as the positive class (frauds) account for 0.172\% of all transactions. 

\subsubsection{Experiment Setup} The experiments demonstrate the outcome of collaborations among $N=3$ financial institutes for training a SVM classifier to detect fraudulent activities automatically and rapidly. Each institute has access to a private dataset of $n_i$ historical credit card transactions and  their authenticity. The value of $\epsilon_i$ for each institute determines eagerness for collaboration. In L-SVM, the model is  $\mathfrak{M}(x;\theta):=\theta^\top [x^\top \; 1 ]^\top$, and $g_1(\theta):=(1/2)\theta^\top \theta $ and $ g_2(\mathfrak{M}(x;\theta),y):=\max(0,1-\mathfrak{M}(x;\theta)y).$

\subsubsection{Results} First, we investigate the transient behaviour of the iterates of  Algorithm~\ref{alg:1}. Assume that $n_1=n_2=n_3=30,000$. Figure~\ref{fig:b0} shows the statistics of the relative fitness of the iterates of Algorithm~\ref{alg:1} for training a fraud detection SVM classifier, $\psi(\bar{\theta}[k])$, versus the iteration number $k$ for $T=100$ for three choices of privacy  budgets $\epsilon_1=\epsilon_2=\epsilon_3$. The boxes, i.e., the vertical lines at each iterations, illustrate the range of 25\% to 75\% percentiles of relative fitness extracted from one-hundred runs of the algorithm and the black lines show the median relative fitness. As expected from Theorem~\ref{tho:2}, the performance of the trained SVM classifier gets closer to the SVM classifier trained with no privacy constraints $\theta^*$ as the privacy  budgets increases.

Now, we can demonstrate the effect of the size of the datasets and the privacy  budgets on the performance of the trained SVM classifier at the end of $T$ training iterations. Figure~\ref{fig:b1} shows the expectation of the relative fitness of the stochastic gradient method in Algorithm~\ref{alg:1} after $T=100$ iterations versus the size of the datasets $n_1=n_2=n_3$ and the privacy  budgets $\epsilon_1=\epsilon_2=\epsilon_3$. Similar to the theoretical results in Theorem~\ref{tho:2}, the fitness improves by increasing the size of the datasets $n_1=n_2=n_3$ and the privacy  budgets $\epsilon_1=\epsilon_2=\epsilon_3$. We can also isolate the effects of the size of the datasets and the privacy  budgets. Figure~\ref{fig:b3} illustrates the expectation of the relative fitness of the iterates of Algorithm~\ref{alg:1} after $T=100$ iterations versus the privacy  budgets $\epsilon_1=\epsilon_2=\epsilon_3$. As all linear slopes in the log-log scale in Figure~\ref{fig:b3} are $-2$, the bound in Theorem~\ref{tho:utility_strongconvex} seems to be a perfect fit. Figure~\ref{fig:a4} shows the expectation of the relative fitness of the iterates of Algorithm~\ref{alg:1} after $T=100$ iterations versus the size of the datasets $n_1=n_2=n_3$ revealing the exact behaviour predicted in the bound in Theorem~\ref{tho:utility_strongconvex}. 

Finally, we evaluate the performance of the learning algorithm for four distinct scenarios, in which the second and the third data owners have: (\textit{i}) small datasets and small privacy  budgets; (\textit{ii}) small datasets and large privacy  budgets; (\textit{iii}) large datasets and small privacy  budgets; (\textit{iv}) large datasets and large privacy  budgets. Figure~\ref{fig:b5} illustrates the expectation of the relative fitness of Algorithm~\ref{alg:1} after $T=100$ iterations versus the size of the dataset $n_1$ and the privacy  budget $\epsilon_1$ for four distinct scenarios of collaboration. The first scenario in Figure~\ref{fig:b5} (the left most plot) illustrates that there is no point in collaboration with small data owners even if the size of the dataset of the first data owner is large and it is eager to share its data.  In the second scenario (the second left most plot in Figure~\ref{fig:b5}), the effect of $\epsilon_1$ and $n_1$ are more pronounced because the privacy  budgets of the second and the third data owners are large and thus they do not degrade the performance of the learning algorithm by injecting excessive privacy-preserving noise.
The third scenario is again similar to the first one, albeit with better results as conservative data owners are relatively larger. The best scenario for collaboration, similar to the loan example, is the fourth scenario in which the training performances with and without privacy constraints are identical, so long as the dataset of the first subsystem is large, or its privacy  budget is not too small.

\section{Discussions, Conclusions, and Future Research}
\label{sec:conclusions}
We considered privacy-aware optimization-based ML on distributed private datasets. We assumed that the data owners provide DP responses to gradient queries. The theoretical analysis of the proposed DP gradient descent algorithms provided a way for predicting the quality of ML models based on the  privacy budgets and the size of the datasets. We proved that the difference between the training model with and without considering privacy constrains of the data owners is bounded by $(\sum_{\ell\in\mathcal{N}} n_\ell)^{-2}\sum_{\ell\in\mathcal{N}}\epsilon_\ell^{-2}$ in our  proposed algorithms under smoothness and strong-convexity assumptions for the fitness cost. The empirical results with real-world financial datasets split between multiple institutes/banks while using regression and support vector machine models demonstrated that the relative fitness in fact follows $\epsilon_i^{-2}$ and $n_i^{-2}$ for the proposed algorithm. This shows the tightness of the upper bounds on the difference between the trained ML models with and without privacy constraints from the theoretical analysis, which can be utilized for quantification of the privacy-utility trade-off in privacy-preserving ML. 

\farhad{Note that the data owners, themselves, can also play the role of the learner in Figure~\ref{fig:0}. In this case, the data owner who is interested in learning a model can query the other data owners to provide DP gradients to use for learning. Now, in this case, as the other data owners cannot access the trained model or the query responses, the data owner who is training the model can set its own privacy budget to infinity. Following this approach, by creating $N$ copies of the algorithm discussed in this paper, we can remove the central learner and each data owner can learn its own ML model. }

The results of this paper can be used or extended in multiple directions for future research:
\begin{itemize}
    \item We can extend the framework to multiple learners aiming to train separate privacy-aware ML models with similar structures based on their own datasets and DP responses from other learners and private data owners. This is closer in nature to the distributed or federated ML framework over an arbitrary connected communication network. Note that, in this paper, the communication structure among the learner and the data owners is over a star graph with the learner  at the center. 
    \item The results of this paper can be used to understand the behaviour of data owners and learners in a data market for ML training. The utility-privacy trade-off in this paper, in terms of the quality of the trained ML models, can be used in conjunction with the cost of sharing private data of costumers with the learner (in terms of loss of reputation, legal costs, implementation of privacy-preserving mechanisms, and communication infrastructure) to setup a game-theoretic framework for modeling interactions across a data market. The learner can compensate the data owners for access to their private data, by essentially paying them for choosing larger privacy budgets. After negotiations between the data owners and the learners for setting the privacy budgets, the algorithm of this paper can be used to then train ML models, while knowing in advance the expected quality of the trained model.
    \farhad{\item Synchronous updates of the algorithm is indeed a bottleneck of the proposed algorithm. Future work can focus on extending the results of this paper to asynchronous gradient updates where, at each iteration, only a subset of the data owners update the ML model. To be able to ensure the convergence of the asynchronous algorithm, we need to ensure that all the data owners update the model \farhad{as} frequently \farhad{as required}.}
    \farhad{\item Another direction for future research is to extend the framework of this paper to adversarial learning scenarios that can admit more general adversaries (\farhad{than the case of} curious-but-honest adversaries in this paper).}
\end{itemize}

\section*{\farhad{Acknowledgements}}
\farhad{We would like to thank Nicolas Papernot for shepherding our paper. His comments and suggestions greatly helped in improving the paper. The work has been funded, in part, by the ``Data Privacy in AI Platforms (DPAIP): Risks Quantification and Defence Apparatus'' project from the Next Generation Technologies Fund by the Defence Science and Technology (DST) in the Australian Department of Defence and the DataRing project funded by the NSW Cyber Security Network and Singtel Optus pty ltd through the Optus Macquarie University Cyber Security Hub.}

\bibliographystyle{ieeetr}
\bibliography{citation}

\begin{thebibliography}{10}

\bibitem{bennett2018revisiting}
C.~J. Bennett and C.~D. Raab, ``Revisiting the governance of privacy:
  Contemporary policy instruments in global perspective,'' {\em Regulation \&
  Governance}, 2018.

\bibitem{shamir2013stochastic}
O.~Shamir and T.~Zhang, ``Stochastic gradient descent for non-smooth
  optimization: Convergence results and optimal averaging schemes,'' in {\em
  International Conference on Machine Learning}, pp.~71--79, 2013.

\bibitem{kaggle1}
W.~Kan, ``Lending club loan data: Analyze lending club's issued loans.''
\newblock \url{https://www.kaggle.com/wendykan/lending-club-loan-data}, Date
  Accessed: 17 Oct 2018.

\bibitem{kaggle2}
{Machine Learning Group--ULB}, ``Credit card fraud detection: Anonymized credit
  card transactions labeled as fraudulent or genuine.''
\newblock \url{https://www.kaggle.com/mlg-ulb/creditcardfraud/home}, Date
  Accessed: 27 Nov 2018.

\bibitem{Lindell1010073540445986_3}
Y.~Lindell and B.~Pinkas, ``Privacy preserving data mining,'' in {\em Advances
  in Cryptology --- CRYPTO 2000} (M.~Bellare, ed.), (Berlin, Heidelberg),
  pp.~36--54, Springer Berlin Heidelberg, 2000.

\bibitem{du2004privacy}
W.~Du, Y.~S. Han, and S.~Chen, ``Privacy-preserving multivariate statistical
  analysis: Linear regression and classification,'' in {\em Proceedings of the
  2004 SIAM international conference on data mining}, pp.~222--233, SIAM, 2004.

\bibitem{vaidya2002privacy}
J.~Vaidya and C.~Clifton, ``Privacy preserving association rule mining in
  vertically partitioned data,'' in {\em Proceedings of the eighth ACM SIGKDD
  international conference on Knowledge discovery and data mining},
  pp.~639--644, ACM, 2002.

\bibitem{vaidya2008privacy}
J.~Vaidya, M.~Kantarc{\i}o{\u{g}}lu, and C.~Clifton, ``Privacy-preserving naive
  bayes classification,'' {\em The VLDB Journal}, vol.~17, no.~4, pp.~879--898,
  2008.

\bibitem{jagannathan2005privacy}
G.~Jagannathan and R.~N. Wright, ``Privacy-preserving distributed k-means
  clustering over arbitrarily partitioned data,'' in {\em Proceedings of the
  eleventh ACM SIGKDD international conference on Knowledge discovery in data
  mining}, pp.~593--599, ACM, 2005.

\bibitem{bonawitz2017practical}
K.~Bonawitz, V.~Ivanov, B.~Kreuter, A.~Marcedone, H.~B. McMahan, S.~Patel,
  D.~Ramage, A.~Segal, and K.~Seth, ``Practical secure aggregation for
  privacy-preserving machine learning,'' in {\em Proceedings of the 2017 ACM
  SIGSAC Conference on Computer and Communications Security}, pp.~1175--1191,
  ACM, 2017.

\bibitem{graepel2012ml}
T.~Graepel, K.~Lauter, and M.~Naehrig, ``{ML} confidential: Machine learning on
  encrypted data,'' in {\em International Conference on Information Security
  and Cryptology}, pp.~1--21, Springer, 2012.

\bibitem{hunt2018chiron}
T.~Hunt, C.~Song, R.~Shokri, V.~Shmatikov, and E.~Witchel, ``Chiron:
  Privacy-preserving machine learning as a service,'' {\em arXiv preprint
  arXiv:1803.05961}, 2018.

\bibitem{li2017multi}
P.~Li, J.~Li, Z.~Huang, T.~Li, C.-Z. Gao, S.-M. Yiu, and K.~Chen, ``Multi-key
  privacy-preserving deep learning in cloud computing,'' {\em Future Generation
  Computer Systems}, vol.~74, pp.~76--85, 2017.

\bibitem{aono2018privacy}
Y.~Aono, T.~Hayashi, L.~Wang, S.~Moriai, {\em et~al.}, ``Privacy-preserving
  deep learning via additively homomorphic encryption,'' {\em IEEE Transactions
  on Information Forensics and Security}, vol.~13, no.~5, pp.~1333--1345, 2018.

\bibitem{gilad2016cryptonets}
R.~Gilad-Bachrach, N.~Dowlin, K.~Laine, K.~Lauter, M.~Naehrig, and J.~Wernsing,
  ``Cryptonets: Applying neural networks to encrypted data with high throughput
  and accuracy,'' in {\em International Conference on Machine Learning},
  pp.~201--210, 2016.

\bibitem{sarwate2013signal}
A.~D. Sarwate and K.~Chaudhuri, ``Signal processing and machine learning with
  differential privacy: Algorithms and challenges for continuous data,'' {\em
  IEEE signal processing magazine}, vol.~30, no.~5, pp.~86--94, 2013.

\bibitem{zhang2012functional}
J.~Zhang, Z.~Zhang, X.~Xiao, Y.~Yang, and M.~Winslett, ``Functional mechanism:
  Regression analysis under differential privacy,'' {\em Proceedings of the
  VLDB Endowment}, vol.~5, no.~11, pp.~1364--1375, 2012.

\bibitem{chaudhuri2009privacy}
K.~Chaudhuri and C.~Monteleoni, ``Privacy-preserving logistic regression,'' in
  {\em Advances in Neural Information Processing Systems}, pp.~289--296, 2009.

\bibitem{zhang2016differential}
Z.~Zhang, B.~I.~P. Rubinstein, and C.~Dimitrakakis, ``On the differential
  privacy of {Bayesian} inference,'' in {\em AAAI Conference on Artificial
  Intelligence}, pp.~2365--2371, 2016.

\bibitem{zhang2017dynamic}
T.~Zhang and Q.~Zhu, ``Dynamic differential privacy for {ADMM}-based
  distributed classification learning,'' {\em IEEE Transactions on Information
  Forensics and Security}, vol.~12, no.~1, pp.~172--187, 2017.

\bibitem{huang2018dp}
Z.~Huang, R.~Hu, Y.~Gong, and E.~Chan-Tin, ``{DP-ADMM}: {ADMM}-based
  distributed learning with differential privacy,'' {\em Preprint: {\tt arXiv
  preprint arXiv:1808.10101}}, 2018.

\bibitem{huang2015differentially}
Z.~Huang, S.~Mitra, and N.~Vaidya, ``Differentially private distributed
  optimization,'' in {\em Proceedings of the 2015 International Conference on
  Distributed Computing and Networking}, p.~4, 2015.

\bibitem{nozari2018differentially}
E.~Nozari, P.~Tallapragada, and J.~Cort{\'e}s, ``Differentially private
  distributed convex optimization via functional perturbation,'' {\em IEEE
  Transactions on Control of Network Systems}, vol.~5, no.~1, pp.~395--408,
  2018.

\bibitem{hale2015differentially}
M.~Hale and M.~Egersted, ``Differentially private cloud-based multi-agent
  optimization with constraints,'' in {\em Proceedings of the American Control
  Conference}, pp.~1235--1240, 2015.

\bibitem{shokri2015privacy}
R.~Shokri and V.~Shmatikov, ``Privacy-preserving deep learning,'' in {\em
  Proceedings of the 22nd ACM SIGSAC conference on computer and communications
  security}, pp.~1310--1321, ACM, 2015.

\bibitem{abadi2016deep}
M.~Abadi, A.~Chu, I.~Goodfellow, H.~B. McMahan, I.~Mironov, K.~Talwar, and
  L.~Zhang, ``Deep learning with differential privacy,'' in {\em Proceedings of
  the 2016 ACM SIGSAC Conference on Computer and Communications Security},
  pp.~308--318, 2016.

\bibitem{mcmahan2017learning}
H.~B. McMahan, D.~Ramage, K.~Talwar, and L.~Zhang, ``Learning differentially
  private recurrent language models,'' {\em arXiv preprint arXiv:1710.06963},
  2017.

\bibitem{zhang2018privacy}
T.~Zhang, Z.~He, and R.~B. Lee, ``Privacy-preserving machine learning through
  data obfuscation,'' {\em arXiv preprint arXiv:1807.01860}, 2018.

\bibitem{han2017differentially}
S.~Han, U.~Topcu, and G.~J. Pappas, ``Differentially private distributed
  constrained optimization,'' {\em IEEE Transactions on Automatic Control},
  vol.~62, no.~1, pp.~50--64, 2017.

\bibitem{shor2012minimization}
N.~Z. Shor, {\em Minimization methods for non-differentiable functions}, vol.~3
  of {\em Springer Series in Computational Mathematics}.
\newblock Berlin, Heidelberg: Springer, 2012.

\bibitem{dwork2014algorithmic}
C.~Dwork and A.~Roth, ``The algorithmic foundations of differential privacy,''
  {\em Foundations and Trends in Theoretical Computer Science}, vol.~9,
  no.~3--4, pp.~211--407, 2014.

\bibitem{grimmer2019convergence}
B.~Grimmer, ``Convergence rates for deterministic and stochastic subgradient
  methods without lipschitz continuity,'' {\em SIAM Journal on Optimization},
  vol.~29, no.~2, pp.~1350--1365, 2019.

\bibitem{bassily2014private}
R.~Bassily, A.~Smith, and A.~Thakurta, ``Private empirical risk minimization:
  Efficient algorithms and tight error bounds,'' in {\em 2014 IEEE 55th Annual
  Symposium on Foundations of Computer Science}, pp.~464--473, IEEE, 2014.

\bibitem{nesterov2013introductory}
Y.~Nesterov, {\em Introductory Lectures on Convex Optimization: A Basic
  Course}.
\newblock Applied Optimization, Springer US, 2013.

\end{thebibliography}

\appendices
\section{Proof of Theorem~\ref{tho:1}}
\label{proof:tho:1}
First, \farhad{because of~\eqref{eqn:diff_privacy_additive_noise}, we have}
\begin{align*}
\farhad{
    \|\overline{\mathfrak{Q}}_\ell(\mathcal{D}_\ell;k)-\overline{\mathfrak{Q}}_\ell(\mathcal{D}'_\ell;k))\|_1
=}&\farhad{\frac{1}{n_\ell}\Bigg\|\sum_{\{x,y\}\in\mathcal{D}_\ell}
\xi_{\bar{g}_2^{x,y}}(\theta[k])}\\
&\farhad{-\sum_{\{x,y\}\in\mathcal{D}'_\ell}
\xi_{\bar{g}_2^{x,y}}(\theta[k])\Bigg\|_1}\\
\farhad{=}&\farhad{\frac{1}{n_\ell}\| \xi_{\bar{g}_2^{x,y}}(\theta[k])|_{\{x,y\}\in\mathcal{D}_\ell\subseteq\mathcal{D}'_\ell}}\\
&\farhad{-\xi_{\bar{g}_2^{x,y}}(\theta[k])
|_{\{x,y\}\in\mathcal{D}'_\ell\subseteq\mathcal{D}_\ell}\|_1.}
\end{align*}
This implies that
$\|\overline{\mathfrak{Q}}_\ell(\mathcal{D}_\ell;k)-\overline{\mathfrak{Q}}_\ell(\mathcal{D}'_\ell;k))\|_1
\leq (2/n_\ell)\max_{\{x,y\}\in\mathcal{D}'_\ell\subseteq\mathcal{D}_\ell \cup \mathcal{D}_\ell\subseteq\mathcal{D}'_\ell}
\| \xi_{\bar{g}_2^{x,y}}(\theta[k])\|_1
\leq 2\Xi/n_\ell$\farhad{, where the last inequality follows from Assumption~\ref{assum:maximugradient}. Noting the exponential form of the Laplace random variable, we get} 
\begin{align*}
\farhad{\frac{p((\overline{\mathfrak{Q}}_\ell(\mathcal{D}_\ell;k))_{k=1}^T)}{p((\overline{\mathfrak{Q}_\ell}(\mathcal{D}'_\ell;k))_{k=1}^T)}}
&\farhad{=\hspace{-.03in}\prod_{k=1}^T\hspace{-.03in}\exp\hspace{-.03in}\Bigg(\hspace{-.03in}\frac{\|\overline{\mathfrak{Q}}_\ell(\mathcal{D}'_\ell;k)\|_1}{b}\hspace{-.03in}-\hspace{-.03in}\frac{\|\overline{\mathfrak{Q}}_\ell(\mathcal{D}_\ell;k)\|_1}{b} \hspace{-.05in}\Bigg)}\\
&\farhad{\leq \prod_{k=1}^T \exp(2\Xi/bn_\ell)}\\
&\farhad{=\exp(2\Xi T/bn_\ell),}
\end{align*}
where, by some abuse of notation, $p(\cdot)$ denotes the probability density of the variable in its argument. \farhad{Substituting $b=2\Xi T/(n_\ell \epsilon_\ell)$ in this inequality concludes the proof. }

\section{Proof of Theorem~\ref{tho:utility_strongconvex}}\label{proof:tho:utility_strongconvex}
\farhad{The magnitude of the DP noise is }
\begin{align*}
\mathbb{E}\{\|w[k]\|_2^2\}
=&
\mathbb{E}\bigg\{\bigg\| \bigg(\frac{1}{\sum_{\ell\in\mathcal{N}}n_j}\bigg)\sum_{j\in\mathcal{N}}n_\ell w_\ell[k]\bigg\|_2^2 \bigg\}\\
=&\bigg(\frac{1}{\sum_{\ell\in\mathcal{N}}n_j}\bigg)^2\sum_{\ell\in\mathcal{N}}n_\ell^2\mathbb{E}\{\|w_\ell[k]\|_2^2 \}\\
=&\bigg(\frac{1}{\sum_{\ell\in\mathcal{N}}n_j}\bigg)^2\sum_{\ell\in\mathcal{N}}\frac{8\Xi^2 T^2}{\epsilon_\ell^2}\\
=&\frac{8\Xi^2 T^2}{n^2}\sum_{\ell\in\mathcal{N}}\frac{1}{\epsilon_\ell^2}.
\end{align*}
Because $\nabla f$ is $\lambda$-Lipschitz, $f(z_1)\leq f(z_2)+\nabla f(z_2)^\top (z_1-z_2)+0.5\lambda\|z_2-z_1\|_2^2$ for all $z_1,z_2$~\cite{nesterov2013introductory} and therefore
\begin{align*}
\mathbb{E}\{f(\theta[k+1])\}\leq 
&\mathbb{E}\{f(\theta[k])\}\\
&+\mathbb{E}\{\nabla f(\theta[k])^\top(\theta[k+1]-\theta[k])\}\\
& +\frac{\lambda}{2}\mathbb{E}\{\|\theta[k+1]-\theta[k]\|_2^2\}\\
\leq &\mathbb{E}\{f(\theta[k])\}\\
&+\rho_k\bigg(\frac{\lambda\rho_k}{2}-1 \bigg)\mathbb{E}\{\|\nabla f(\theta[k])\|_2^2\}\\
& +\rho_k^2\frac{8\Xi^2 T^2}{n^2}\sum_{\ell\in\mathcal{N}}\frac{1}{\epsilon_\ell^2}.
\end{align*}
For all $\rho_k\leq 1/\lambda$, we have
\begin{align}
\mathbb{E}\{f(\theta[k+1])\}
\leq &\mathbb{E}\{f(\theta[k])\}-\frac{\rho_k}{2}\mathbb{E}\{\|\nabla f(\theta[k])\|_2^2\}\nonumber\\
& +\rho_k^2\frac{8\Xi^2 T^2}{n^2}\sum_{\ell\in\mathcal{N}}\frac{1}{\epsilon_\ell^2}.
\label{eqn:proof:newtho:1}
\end{align}
For $\varepsilon>0$, we may define
\begin{align*}
k_0:=\inf_{k} \bigg\{k\,\bigg|\, \mathbb{E}\{\|\nabla f(\theta[k])\|_2^2\}
\leq\frac{16\Xi^2 T^2\rho_k}{n^2}\sum_{\ell\in\mathcal{N}}\frac{1}{\epsilon_\ell^2}+\varepsilon \bigg\}.
\end{align*}
\farhad{Here, $k_0$ is the iteration number at which the magnitude of the last term in the right hand side of~\eqref{eqn:proof:newtho:1} (a positive value) becomes larger than the magnitude of the second to the last term in the right hand side of~\eqref{eqn:proof:newtho:1} (a negative value). In essence, at $k_0$, the upper bound on the cost function does not reduce.} If $T$ is large enough, we can easily show that there exists $k_0<\infty$. This can be proved by contrapositive. Assume that this not the case. Therefore,
\begin{align*}
\lim_{k\rightarrow 0}\mathbb{E}\{f(\theta[k])\}
=&\mathbb{E}\{f(\theta[1])\}\\
&+\sum_{t=2}^k(\mathbb{E}\{f(\theta[t])\}
-\mathbb{E}\{f(\theta[t-1])\})\\
\leq & \mathbb{E}\{f(\theta[1])\}-\sum_{t=2}^k\varepsilon \rho_k\\
=&-\infty.
\end{align*}
This is however not possible. Since $f$ is $L$-strongly convex, Polyak-Lojasiewicz inequality~\cite{nesterov2013introductory} implies that 
\begin{align*}
\mathbb{E}\{f(\theta[k_0])\}-f(\theta^*)\leq 
&\frac{1}{2L}\mathbb{E}\{\|\nabla f(\theta[k])\|_2^2\}\\
\leq & \frac{8\Xi^2 T^2\rho_k}{L n^2}\sum_{\ell\in\mathcal{N}}\frac{1}{\epsilon_\ell^2}+\frac{\varepsilon}{2L}.
\end{align*}
Now, because $k_0\leq T$, we get
\begin{align*}
\min_{1\leq k\leq T} \mathbb{E}\{f(\theta[k])\}-f(\theta^*)
\leq& \mathbb{E}\{f(\theta[k_0])\}-f(\theta^*)\\
\leq & \frac{8\Xi^2 T^2\rho_k}{L n^2}\sum_{\ell\in\mathcal{N}}\frac{1}{\epsilon_\ell^2}+\frac{\varepsilon}{2L}.
\end{align*}
Again, because $f$ is $L$-strongly convex, we can see that
\begin{align*}
f(\theta^*)
&\leq f(t\theta+(1-t)\theta^*) \\
&\leq tf(\theta)+(1-t)f(\theta^*)-\frac{L}{2}t(t-1)\big\|
\theta-\theta^*
 \big\|_2^2,
\end{align*}
for all $t\in(0,1)$. Setting $t=1/2$ results in
\begin{align} \label{eqn:proof:stronglyconvexinequality}
\big\|
\theta-\theta^*
 \big\|_2^2
\leq 4(f(\theta)-f(\theta^*))/L.
\end{align}
Hence,
\begin{align*}
\min_{1\leq k\leq T}\|\theta[k]-\theta^*\|_2^2\leq & \frac{4}{L}\bigg(\min_{1\leq k\leq T} \mathbb{E}\{f(\theta[k])\}-f(\theta^*)\bigg)\\
\leq & \frac{32\Xi^2 T^2\rho_k}{L^2 n^2}\sum_{\ell\in\mathcal{N}}\frac{1}{\epsilon_\ell^2}+\frac{\varepsilon}{8L^2}.
\end{align*}
This concludes the proof.

\section{Proof of Theorem~\ref{tho:2}}
\label{proof:tho:2}
\farhad{The proof for this theorem follows from modification of the results of~\cite{shamir2013stochastic}. In fact}, the inequality in~\eqref{eqn:1} follows from the result of~\cite{shamir2013stochastic} using the optimal selection of $c$ in~\cite{han2017differentially}. The only difference with the proofs in~\cite{shamir2013stochastic} is to appreciate that
 \begin{align*}
 \zeta_k-\zeta_{k-1}\leq \frac{2}{\sqrt{T}T(T+1)},
 \end{align*}
 where
 \begin{align*}
 \zeta_k:=
 &\frac{1/\sqrt{T}+1}{1/\sqrt{T}+k} \prod_{m=k+1}^T \frac{m-1}{1/\sqrt{T}+m}. 
 \end{align*}
The inequality follows from that
\begin{align*}
\zeta_k-\zeta_{k-1}
&=
\frac{(1/\sqrt{T})(1/\sqrt{T}+1)}{(k-1+1/\sqrt{T})(k+1/\sqrt{T})}
\\
&\hspace{.2in}\times\prod_{m=k+1}^T \frac{m-1}{1/\sqrt{T}+m} \\
&=
\frac{(1/\sqrt{T})(1/\sqrt{T}+1)}{(k-1+1/\sqrt{T})(k+1/\sqrt{T})}
\\
&\hspace{.2in}\times \frac{\prod_{m=k+1}^T (m-1)}{\prod_{m=k+1}^T(1/\sqrt{T}+m)} \\
 &=
(1/\sqrt{T})(1/\sqrt{T}+1)
 \frac{\prod_{m=k}^{T-1} m}{\prod_{m=k-1}^T(1/\sqrt{T}+m)} \\
  &=
(1/\sqrt{T})(1/\sqrt{T}+1)
 \frac{\prod_{m=k}^{T-1} m}{\prod_{m=k}^{T+1}(1/\sqrt{T}+m-1)} \\
   &=
\frac{(1/\sqrt{T})(1/\sqrt{T}+1)}{T(T+1)}
\prod_{m=k}^{T+1} \frac{ m}{(1/\sqrt{T}+m-1)}\\
& \leq \frac{2}{\sqrt{T}}\frac{1}{T(T+1)}.
\end{align*}
If $f$ is $L$-strongly convex, the proof of the inequality in~\eqref{eqn:2} follows from~\eqref{eqn:proof:stronglyconvexinequality}.

\end{document}